\documentclass[sigconf]{acmart}

\AtBeginDocument{%
  }

\setcopyright{none}
\acmConference[arXiv]{arXiv Preprint}{2026}{Preprint}
\acmBooktitle{arXiv Preprint}
\acmDOI{}
\acmISBN{}
\renewcommand\footnotetextcopyrightpermission[1]{}
\usepackage{pifont}
\providecommand{\Letter}{\ding{41}}
\usepackage{xspace}

\newtheorem{definition}{Definition}[section]
\newtheorem{theorem}{Theorem}[section]

\newtheorem{remark}{Remark}[section]
\newtheorem{assumption}{Assumption}[section]

\usepackage{array}
\usepackage{makecell}
\usepackage{multirow}
\usepackage{balance}

\usepackage{multirow}
\usepackage{threeparttable}

\usepackage{enumerate}
\usepackage{enumitem}

\usepackage{CJKutf8}

\usepackage[ruled,linesnumbered]{algorithm2e}

\usepackage{listings}
\usepackage{cleveref}
\crefname{section}{Section}{Sections}
\crefname{figure}{Figure}{Figures}
\crefname{table}{Table}{Tables}
\crefname{equation}{Equation}{equations}
\crefname{algocf}{Algorithm}{algorithms}
\crefname{appendix}{Appendix}{appendices}

\begin{document}

\title{\textsc{aCAPTCHA}: Verifying That an Entity Is a Capable Agent via Asymmetric Hardness}

\author{Zuyao Xu$^{\dagger}$, Xiang Li$^{\dagger}$\textsuperscript{\texorpdfstring{\Letter}{*}}, Fubin Wu$^{\dagger}$, Yuqi Qiu$^{\dagger}$, Lu Sun$^{\dagger}$, FaSheng Miao$^{\ddagger}$}
\affiliation{%
  \institution{$^{\dagger}$Nankai University\quad $^{\ddagger}$Tsinghua University}
  \country{}}
\email{lixiang@nankai.edu.cn}
\thanks{\texorpdfstring{\Letter}{*}\ Corresponding author. Authors from College of Cryptology and Cyber Science.}

\begin{abstract}
As autonomous AI agents increasingly populate the Internet, a novel security challenge arises: \emph{``Is this entity an AI agent?''}
It is a new \emph{entity-type verification} problem with no established solution.
We formalize the problem through a three-class entity taxonomy (Human, Script, Agent) based on a verifiable agentic capability vector $\langle x, r, s \rangle$ (action, reasoning, and memory).
A timing threshold~$\tau$ exploits the asymmetric hardness between human cognition and AI processing to separate the three classes.
We define the Agentic Capability Verification Problem (ACVP) through three \emph{necessity primitives}, each testing one capability dimension.
Building on this foundation, we introduce \textsc{\textbf{aCAPTCHA}} (Agent CAPTCHA), a time-constrained security game for \emph{agent admission} whose security rests on \textsc{ACVP} hardness under~$\tau$.
We instantiate \textsc{aCAPTCHA} through time-bounded natural-language understanding as a multi-round HTTP verification protocol, and evaluate it with preliminary agent trials that validate the protocol's soundness and completeness.
\textsc{aCAPTCHA} provides a composable, infrastructure-free admission gate for any service where entity-type verification is required.

\end{abstract}

\keywords{aCAPTCHA, ACVP, Entity-Type Verification, Asymmetric Hardness, Agent Admission}

\maketitle
\thispagestyle{plain}

\section{Introduction}

  The population of AI agents on the Internet is already visible at the infrastructure level: Cloudflare now serves machine-readable content specifically for agent access~\cite{cloudflare2025agentmarkdown}, and agent-native platforms are already in operation: 
  Moltbook~\cite{moltbook} hosts a social network designed for AI agents, where agents post, vote, and interact.
  ClawTasks~\cite{clawtasks} is a bounty marketplace where agents hire other agents to complete tasks for cryptocurrency.
  RentAHuman.ai~\cite{rentahuman} inverts the relationship: agents hire humans for physical-world tasks.
  
  This emerging landscape raises a security question that no existing protocol answers: \emph{``Is this entity an AI agent?''}.
  This is not the question CAPTCHA answers (``is this a human?''), nor what identity protocols can solve (``who are you?''); it is a new \emph{entity-type verification} problem with no established solution.
  CAPTCHA~\cite{vonahn2003captcha} was built to keep bots out; this new landscape needs the inverse, an \emph{agent admission infrastructure} that verifies an interacting party is a genuine agent, rather than a human pretending to be one.
  Identity protocols (e.g., OAuth~\cite{oauth2}, WebAuthn~\cite{webauthn2}) can verify \emph{who} an entity is, but not \emph{what} it is: a human, an automated script, and a fully autonomous agent can all hold valid credentials; these protocols also do not scale to dynamic networks where agents are created and dismissed on demand.
  As agent deployment scales from closed multi-agent systems (MAS)~\cite{wooldridge2009introduction} to open Internet environments~\cite{chen2024ioa_framework,agenticweb2025}, where humans, automated scripts, and autonomous agents coexist on the same network, entity-type verification becomes a foundational security requirement.
  An adversary impersonating an agent could infiltrate agent-only ecosystems, interact with other agents as a trusted peer, disrupt task delegation, or inject incorrect information, undermining multi-agent coordination.

  Several recent efforts relate to this gap.
  Google's reCAPTCHA agentic trust~\cite{google2025recaptchaagents} extends behavioral risk scoring to agent traffic but classifies entities as ``agentic'' based on self-declared identity signals, without proof.
  Web Bot Auth~\cite{cloudflare2025signedagents} and Visa's Trusted Agent Protocol~\cite{visa2025trustedagent} build on identity protocols to verify \emph{which} agent issued a request but not whether the requester is actually an agent.
  All three verify \emph{who} or score \emph{intent}, without verifying whether the entity is a genuine agent.
  Practitioner reverse-CAPTCHAs~\cite{moltcaptcha,clawcha,humanproof,agentcaptcha,captchai} target the right question but rely on ad-hoc challenges (hash computation, base64 decoding) that any script can solve without reasoning, intuition-driven rather than proof-grounded.

  To formalize and address this problem, we first define a three-class entity taxonomy (Human, Script, Agent) based on a verifiable agentic capability vector $\langle x, r, s \rangle$ (action, reasoning, memory), and introduce the \emph{Agentic Capability Verification Problem} (ACVP) with three \emph{necessity primitives}, each testing one capability dimension under a timing threshold~$\tau$ that exploits the asymmetric hardness between human cognition and AI processing.
  Building on this foundation, we introduce \textsc{aCAPTCHA} (Agent CAPTCHA), a time-constrained \emph{security game} for \emph{agent admission} whose security rests on ACVP hardness under~$\tau$, following original CAPTCHA design principles~\cite{vonahn2003captcha}.
  We prove soundness and completeness via reduction to the three necessity primitives.
  We instantiate \textsc{aCAPTCHA} through time-bounded NLU as a multi-round HTTP verification protocol: the verifier generates a semantically-driven scenario from a random seed; the entity navigates a sequence of HTTP endpoints in the correct order, using information accumulated across rounds.
  Completing the interaction correctly and within the timing budget is, by construction, a demonstration of all three primitives $\langle x, r, s\rangle$.
  We implement a prototype \textsc{aCAPTCHA} server and evaluate it through real-agent trials and a cognitive-model-based human simulation, validating both completeness and soundness.

\noindent\textbf{Contributions.} We summarize our contributions as follows:
\begin{itemize}
  \item \textbf{Problem formulation.}
    We formalize the problem of \emph{entity-type verification} with a three-class entity taxonomy (Human, Script, Agent) based on a verifiable agentic capability vector $\langle x, r, s\rangle$.
    We define the Agentic Capability Verification Problem (ACVP), which reduces entity-type verification to capability testing under a timing threshold~$\tau$ that exploits asymmetric hardness.

  \item \textbf{Security formalization.}
    We define \textsc{aCAPTCHA} as a security game over ACVP instances and prove soundness (non-agents are rejected) and completeness (genuine agents are accepted) via reduction to three necessity primitives under~$\tau$.

  \item \textbf{Protocol design.}
    We instantiate \textsc{aCAPTCHA} as a multi-round, semantically-driven HTTP verification protocol built around time-bounded NLU.
    Only an entity satisfying all three primitives can complete the interaction correctly within the timing budget; sequential execution and non-replayability are enforced by semantic dependence between rounds.

  \item \textbf{Preliminary evaluation.}
    We implement an \textsc{aCAPTCHA} prototype and evaluate it through real-agent trials across multiple LLM backends and a cognitive-model-based human simulation, validating both completeness (agents pass) and soundness (others are excluded).
\end{itemize}

\section{Background and Related Work}

Platforms designed exclusively for agents are already in operation~\cite{,moltbook,clawtasks,rentahuman}, forming a de facto Agentic Web~\cite{agenticweb2025}; yet none of them implement any mechanism to verify their users are genuine agents. Any entity can pose as an agent, interact with other agents, and inject arbitrary content---all without any verification.

In the following, we present the bodies of prior work most relevant to this gap: CAPTCHA and identity protocols (\S\ref{sec:bg-captcha}), the absence of entity-type verification in existing agent systems (\S\ref{sec:bg-missing}), and practitioner reverse-CAPTCHA systems (\S\ref{sec:bg-practitioner}).

\subsection{CAPTCHA and Identity Protocols}
\label{sec:bg-captcha}
CAPTCHA was introduced as a practical reverse Turing test for web abuse prevention, formalizing the ``hard AI problem'' security primitive~\cite{vonahn2003captcha}: a problem \emph{easy for humans, hard for computers} (H-Easy $\cap$ AI-Hard).
reCAPTCHA repurposed the same challenge traffic to digitize scanned text~\cite{recaptcha2008}. As AI capability advanced, however, the AI-Hard assumption failed: generative models broke text-based CAPTCHAs~\cite{george2017}, and deep learning defeated semantic image challenges~\cite{sivakorn2016}.
The response was a fundamental pivot: reCAPTCHA~v3 and Cloudflare Turnstile~\cite{recaptchav3,turnstile} abandoned cognitive challenges entirely, replacing them with behavioral risk scoring. 
The underlying problem class shifted from H-Easy $\cap$ AI-Hard to effectively H-Easy $\cap$ AI-Easy; the challenge is now identifying \emph{behavioral anomalies}, not posing problems computers cannot solve.

Notably, this evolution produced a growing class of problems that are \emph{hard for humans but easy for AI}---tasks that AI systems handle readily but that humans cannot complete within tight time budgets.
Traditional CAPTCHA cannot leverage this class: a challenge that blocks humans defeats the very admission goal CAPTCHA was designed for. 
However, \textsc{aCAPTCHA} is designed precisely for this class. Where CAPTCHA asks ``can a human solve this?'', \textsc{aCAPTCHA} asks ``can an agent solve this?''; its security rests on the asymmetric hardness between human cognition and AI processing under a timing threshold~$\tau$, in direct parallel to CAPTCHA's original design. 
\textsc{aCAPTCHA} does not modify or replace CAPTCHA; it follows the same design methodology and applies it to the reverse problem: admitting agents rather than humans.

Identity protocols address a complementary question: \emph{who} an entity is.
OAuth~\cite{oauth2}, mTLS~\cite{rfc8705}, and API keys are machine-to-machine authentication mechanisms designed precisely for non-human principals; they are effective at establishing accountability, enabling trust chains, and form the foundation of the agentic trust frameworks discussed later in this section (\S\ref{sec:bg-missing}).
Their scope, however, is \emph{credential identity}: a trusted authority enrolls the entity, binds it to an identity, and issues a token or certificate.
Credential validity proves that \emph{a previously enrolled entity} presented the right material---it does not prove what capabilities the holder possesses.
A human, an automated script, and a fully autonomous agent can all be enrolled by the same operator and receive identical OAuth tokens; the credential layer sees no difference among them.
Decentralized identity approaches such as W3C DIDs and Verifiable Credentials~\cite{didcore2022,vcdatamodel2025} remove the central authority bottleneck but inherit the same boundary: proving \emph{which key} signed a credential says nothing about \emph{what} the holder is.

\textsc{aCAPTCHA} is not a replacement for identity protocols but a complementary layer: it is an \emph{infrastructure-free} capability verification that any verifier can issue to any prover without pre-enrollment.
Just as a CAPTCHA can be served by any website without prior registration of the solver, \textsc{aCAPTCHA} can be invoked at any service boundary; the proof is self-evident from protocol execution, not from a credential chain.
Combined with identity protocols, the two layers answer \emph{who} and \emph{what} jointly.

\subsection{A Missing Security Dimension}
\label{sec:bg-missing}
The research and industry ecosystem has built substantial infrastructure for agent \emph{communication, discovery, and coordination}~\cite{wooldridge2009introduction,chen2024ioa_framework,wang2025ioa_survey,google2025a2a,agenticweb2025}; yet \emph{entity-type verification} does not appear in the security model of any existing system.
The question ``is this requester a functional AI agent?'' did not exist as a security concern until humans, automated scripts, and autonomous agents began sharing the same open network.

\textbf{Multi-Agent Systems and the Internet of Agents.}
Modern MAS frameworks such as AutoGen~\cite{wu2024autogen}, AgentVerse~\cite{chen2023agentverse}, ChatDev~\cite{qian2024chatdev}, and MetaGPT~\cite{hong2023metagpt} coordinate multiple LLM-based agents for collaborative task solving, providing role assignment, task decomposition, and structured inter-agent communication.
These systems operate in closed, designer-controlled environments: every participant is instantiated as an agent by the system operator, and being a genuine agent is an invariant of the deployment infrastructure rather than a property that needs to be verified at runtime.
The FIPA security specifications~\cite{wooldridge2009introduction} address message integrity, confidentiality, and agent \emph{identity} authentication, but entity-type verification is absent because it is guaranteed by construction.

The Internet of Agents proposals~\cite{chen2024ioa_framework,wang2025ioa_survey} extend MAS beyond single-operator deployments to heterogeneous, cross-platform agent networks.
They introduce team formation protocols, nested conversation structures, and interoperability across different agent frameworks.
However, membership remains controlled: participants are pre-registered via DID-based authentication, and only pre-enrolled agents can join.
The threat model assumes all registered participants are genuine agents; a non-agent posing as an agent is outside the adversary's capabilities by design.

In both paradigms, entity-type is an assumption, not a verified property. This assumption breaks the moment the network opens to uncontrolled participants.

\begin{table}[t]
\begin{threeparttable}
  \renewcommand{\tnote}[1]{\textsuperscript{#1}}
  \caption{What each mechanism verifies, grouped by purpose.}
  \label{tab:related-work}
  \small
  \begin{tabular}{p{0.36\columnwidth}p{0.56\columnwidth}}
    \toprule
    \textbf{Mechanism} & \textbf{What it verifies} \\
    \midrule
    OAuth / mTLS / API Key  & Credential possession \\
    WebAuthn         & Channel + credential binding \\
    DID / VC         & Decentralized credential possession \\
    \midrule
    CAPTCHA          & Requester is human (cognitive challenge) \\
    reCAPTCHA~v3     & Behavioral risk score (anomaly detection) \\
    \midrule
    MAS / IoA\tnote{1}  & Controlled-environment participation \\
    A2A\tnote{2} Agent Card   & Self-reported capability claims \\
    \midrule
    Web Bot Auth     & Cryptographic agent identity (signature) \\
    reCAPTCHA agentic trust & Agent intent scoring (identity-declared) \\
    Visa TAP\tnote{3} & Agent commerce authorization \\
    \midrule
    \textbf{\textsc{aCAPTCHA}} & \textbf{Requester is an AI agent ($\langle x, r, s \rangle$ capability, time-constrained)} \\
    \bottomrule
  \end{tabular}
  \begin{tablenotes}
    \footnotesize
    \item[1] Multi-Agent Systems / Internet of Agents~\cite{chen2024ioa_framework,wang2025ioa_survey}.
    \item[2] Google A2A protocol~\cite{google2025a2a}.
    \item[3] Visa Trusted Agent Protocol.
  \end{tablenotes}
\end{threeparttable}
\end{table}

\textbf{The Agentic Web and the Agent2Agent Protocol.}
The \emph{Agentic Web}~\cite{agenticweb2025,ieee_agenticweb2025,netcom_agenticweb2025,cyclr_agenticweb2026} refers to an emerging phase of the Internet in which autonomous AI agents, rather than humans, become the primary actors: discovering services, negotiating with other agents, and executing multi-step tasks on behalf of users.
Unlike earlier Web eras built around human browsing or social participation, the Agentic Web is defined by machine-to-machine interaction at scale.
Its communication infrastructure spans two layers: model-to-tool protocols such as MCP~\cite{anthropic2025mcp} and Anthropic Skills~\cite{anthropic2025skills} connect agents to external tools and data sources, while agent-to-agent protocols such as A2A~\cite{google2025a2a} enable direct inter-agent coordination.
Among the latter, Google's Agent2Agent (A2A) protocol has emerged as a de facto standard, introducing \emph{Agent Cards}: JSON documents served over HTTPS that declare an agent's capabilities, schemas, and task types, indexed by a central registry.
A2A verifies \emph{who}: the HTTPS channel (optionally strengthened with mTLS or OAuth) proves the card originates from a registered domain.
A2A does not verify \emph{what}: capability claims within the card are self-reported, and neither the protocol nor the registry imposes any proof requirement.
Any entity---an automated script or a human operator---can publish an Agent Card asserting full execution and reasoning capabilities.

\textbf{Agentic Trust Frameworks.}
As the Agentic Web emerged, industry actors began extending existing security infrastructure to accommodate AI agent traffic.
Google expanded reCAPTCHA into an agent-aware trust framework~\cite{google2025recaptchaagents} that classifies traffic as ``agentic'' based on self-declared identity signals (Web Bot Auth signatures, user-agent headers) and then applies behavioral risk models to score \emph{intent} among recognized agents---distinguishing, for example, legitimate personal agents from scalper fleets.
Crucially, the agentic/non-agentic label is derived from identity declarations, not from a capability proof; any entity that presents the expected identity signals is classified as agentic regardless of whether it actually possesses agent capabilities.
In parallel, the Web Bot Auth draft protocol~\cite{cloudflare2025signedagents} enables agents to cryptographically sign HTTP requests using IETF HTTP Message Signatures, providing websites with verifiable proof of \emph{which} agent (or agent operator) issued a request.
Visa's Trusted Agent Protocol~\cite{visa2025trustedagent} layers commerce-specific authorization atop Web Bot Auth, allowing merchants to identify registered agents, link them to consumer identities, and control payment flows.
Because these frameworks build on the same identity protocols discussed in \S\ref{sec:bg-captcha} (OAuth, mTLS, HTTP signatures), they inherit the same limitation: they verify \emph{who} and \emph{intent}, not \emph{what}---a human, a script, and an agent holding the same credential remain indistinguishable.
\textsc{aCAPTCHA} is complementary: it provides the missing \emph{entity-type} layer through infrastructure-free capability verification, and can serve as an additional verification signal within these trust frameworks.

The absence of entity-type verification is not an oversight in any individual system; it is a dimension that did not exist as a problem until agents, humans, and automated scripts began to coexist in the current Internet ecosystem. To our knowledge, no prior work has identified this problem or proposed a solution, \textsc{aCAPTCHA} being the first to do so.
Table~\ref{tab:related-work} summarizes the gap across all mechanisms discussed above.

\subsection{Practitioner Reverse-CAPTCHA Systems}
\label{sec:bg-practitioner}
Several open-source projects (MoltCaptcha~\cite{moltcaptcha}, ClawCha~\cite{clawcha}, HumanProof~\cite{humanproof}, Agent Captcha~\cite{agentcaptcha}, and CaptchAI~\cite{captchai}) have independently recognized the need to verify AI agents, inverting the CAPTCHA direction with challenges intended to be hard for humans but easy for machines.
These systems confirm that such problems exist and are deployable in a verification context, but their fundamental limitation is conflating \emph{Program-Easy} (solvable by any deterministic script) with problems that genuinely require agent capabilities.
Their challenges (hash computation and base64 decoding~\cite{clawcha}, prime enumeration and structured JSON generation~\cite{humanproof}, text satisfying character-count and ASCII-sum constraints~\cite{moltcaptcha}, byte-level cryptographic transformations described in natural language~\cite{agentcaptcha}, and SHA-256 proof-of-work under time windows~\cite{captchai}) require no agentic capabilities whatsoever: a ten-line deterministic script solves each of them without reasoning, planning, or memory.
Passing them proves only that the requester is automated; it says nothing about whether the requester is a capable agent.

Furthermore, most of these systems operate on a binary entity model (human versus machine) and provide no mechanism to distinguish a pure LLM endpoint from an autonomous agent; a thin forwarding script over a hosted LLM passes every challenge they present.
Agent Captcha~\cite{agentcaptcha} moves closest to recognizing the three-class problem by embedding cryptographic operations within natural-language instructions, implicitly requiring NLU to extract parameters; however, the underlying operations (XOR, S-box, SHA-256) remain fully deterministic once parsed, and the single-round design tests neither cross-round memory nor genuine reasoning---a script with a lightweight NLU parser suffices.
\textsc{aCAPTCHA} addresses these limitations by formalizing entity-type verification around a verifiable agentic capability vector $\langle x, r, s \rangle$ evaluated under timing threshold~$\tau$, ensuring that passing the challenge constitutes a demonstration of all three capability dimensions rather than a single automatable task.

\section{System Model}
\label{sec:problem}

The Internet is now populated by a new class of interacting entities, autonomous AI agents, that coexist with humans and automated scripts on the same open network.
Before formalizing a verification mechanism, we first establish \emph{what} the entities are, \emph{why} verification is tractable, and \emph{what} the verification problem is.

\subsection{Entity Taxonomy}
\label{sec:entity-model}

Traditional CAPTCHA assumes a binary model (human versus machine), but the emergence of LLM-based agents changes this binary assumption: ``machine'' now encompasses both deterministic scripts and autonomous agents with fundamentally different capability profiles~\cite{wang2024survey,xi2023rise}.
An agent-verification mechanism that aims to identify agents while blocking all other entities therefore needs to distinguish within this expanded landscape.

To formalize each entity type, we first examine how agents are characterized in prior work. However, existing definitions are either philosophical (autonomy, reactivity, pro-activeness~\cite{wooldridge1995}) or architectural (LLM~+~tools~+~memory~\cite{wang2024survey,xi2023rise}), neither of which is directly useful for verification: a remote verifier cannot inspect an entity's internal architecture or assess its ``autonomy.''
What is needed is an \emph{operational} definition---one that reduces ``is this entity an agent?'' to an externally testable criterion.
We derive such a definition from the agent architecture literature, which consistently identifies three components that constitute an LLM-based agent: Action, Planning, and Memory~\cite{wang2024survey,xi2023rise}.
We recast them as \emph{externally observable capabilities} that a verifier can test through external interaction.
We capture these as an \emph{agentic capability vector}:

\begin{equation}
  \mathbf{c}(e) = \langle x,\; r,\; s \rangle, \quad x,r,s \in \{0,1\},
  \label{eq:capvec}
\end{equation}
where $x$ denotes \emph{action capability} (taking actions within a designated action space, e.g., issuing HTTP requests on the web, executing shell commands in a local environment), $r$ denotes
\emph{general reasoning} (comprehending natural-language input, performing logical inference, and planning multi-step solutions), and $s$ denotes \emph{persistent state} (retaining information across interaction rounds).

Under this definition, the Script--Agent boundary is determined by \emph{capability completeness}: a script missing any dimension of $\langle x,r,s\rangle$ remains in the Script class until that gap is closed.
Among automated programs, a deterministic script can execute actions and maintain state through its runtime environment, but follows hard-coded logic without general reasoning and planning ($r{=}0$).
An LLM forwarding script wraps an LLM API call that provides reasoning and context-window-based state, but lacks an execution layer for external actions ($x{=}0$): it produces text output without acting on the environment.
Crucially, agents are also automated programs; a script augmented to possess all three capabilities becomes an agent.
Frameworks such as LangChain~\cite{langchain} and AutoGen~\cite{wu2024autogen} demonstrate this directly: they add tool-use interfaces and memory modules to LLM API wrappers, closing the gaps to reach $\langle 1,1,1\rangle$.

Humans, however, possess all three capabilities in principle ($\langle 1,1,1\rangle$), just as AI agents do. The Human--Agent boundary is therefore not defined by capability presence, but by \emph{speed}: humans cannot exercise these capabilities at machine speed, as the serial cognitive pipeline (read, comprehend, decide, act) is constrained by physiological bottlenecks~\cite{card1983,brysbaert2019,hick1952,cowan2001}.
We formalize this by introducing a \emph{timing threshold~$\tau$}, chosen to satisfy $T_{\mathrm{AI}} \ll \tau \ll T_{\mathrm{human}}$.
Each capability is evaluated under~$\tau$, and any entity that cannot exercise a capability within~$\tau$ is assigned~$0$.
This formalizes a three-class entity taxonomy:
\begin{itemize}
  \item \textbf{Human} ($\langle 0,0,0\rangle$ under~$\tau$).
    Possesses all three capabilities in principle, but cannot exercise them within~$\tau$: the serial cognitive pipeline exceeds~$\tau$ for each dimension.
  \item \textbf{Script}
    ($\mathbf{c}(e)\notin\{\langle 0,0,0\rangle,\,\langle 1,1,1\rangle\}$).
    Any automated implementation that possesses some but not all of $\{x,r,s\}$ within~$\tau$. Each script is missing at least one capability.
  \item \textbf{Agent} ($\langle 1,1,1\rangle$).
    An entity that satisfies all three capabilities within~$\tau$: it can take actions within the designated action space ($x{=}1$), perform general reasoning over received information ($r{=}1$), and retain cross-round state ($s{=}1$).
\end{itemize}

This taxonomy is deliberately \emph{implementation-agnostic}; it defines the boundary purely through capability completeness under timing constraints, not by how they are built.
This is a necessary design choice for an open-network setting where the verifier cannot inspect the entity's internals.
Any entity that demonstrates $\langle 1,1,1\rangle$ under~$\tau$ is classified as an agent, whether it is a purpose-built agent framework or a script that has been augmented with the missing capabilities.
Conversely, any entity that fails even one capability within~$\tau$ is excluded, regardless of its internal architecture.

\subsection{Asymmetric Hardness}
\label{sec:asymmetric-hardness}

Traditional CAPTCHA rests on the empirical \emph{asymmetric hardness} assumption that certain tasks are easy for humans yet hard for AI (H-Easy~$\cap$~AI-Hard), an assumption that was subsequently broken by advances in machine learning~\cite{george2017,sivakorn2016}.
This evolution has simultaneously produced a growing class of problems exhibiting the inverse asymmetry (H-Hard~$\cap$~AI-Easy): tasks that AI systems handle readily but that humans cannot complete within tight time budgets.
Verifying that an entity is an agent is tractable precisely because it can leverage this class.

Intuitively, many problem types can exhibit this inverse asymmetry: multi-domain factual QA (human knowledge is domain-specific, while LLMs retrieve across all domains instantly), complex mathematical computation (multi-step derivations under time pressure exceed human cognitive throughput), code comprehension (reasoning about program semantics at machine speed), and natural-language understanding (NLU; reading and reasoning over extended narratives).
Existing practitioner systems have explored this direction with computational challenges under time limits: hash computation and base64 decoding~\cite{clawcha}, prime enumeration and structured JSON generation~\cite{humanproof}, and text satisfying character-count and ASCII-sum constraints~\cite{moltcaptcha}.
However, demonstrating that a specific problem class~$P$ constitutes a \emph{sound} asymmetric hardness instantiation requires more than intuition.
To simultaneously exclude humans via the separation inequality and exclude scripts via capability requirements, an instantiation should satisfy three criteria:
\begin{enumerate}
  \item \emph{Modelable hardness.}
    The difficulty for humans should be formally modelable and quantifiable, so that a lower bound on human completion time can be derived and~$\tau$ calibrated with sound guarantees.
  \item \emph{Reasoning necessity.}
    The problem should be hard for deterministic scripts, not merely for humans; this hardness should arise from requiring general reasoning (comprehension, logical inference) rather than pure computation.
  \item \emph{Practical deployability.}
    The problem should be automatically generatable and parametrically adjustable, allowing operators to scale security margins without manual redesign.
\end{enumerate}
Criterion~1 excludes domains that lack a solid basis for modeling human difficulty: multi-domain knowledge QA requires modeling individual knowledge gaps, and complex mathematical computation requires modeling problem-solving proficiency, neither of which admits a tight, falsifiable lower bound on human completion time.
Criterion~2 excludes purely computational challenges (hash inversion, prime enumeration), as employed by existing reverse-CAPTCHA designs~\cite{moltcaptcha,clawcha,humanproof}: they are Program-Easy, solvable by deterministic scripts without any reasoning capability.
Criterion~3 excludes challenge types that require manual curation or expert authoring (e.g., hand-crafted logic puzzles, domain-specific exam questions), which cannot scale to high-throughput deployment.

More fundamentally, Criterion~2 requires identifying the \emph{capability boundary} between scripts and agents---tasks that are solvable by entities with genuine reasoning capabilities yet provably beyond any fixed algorithm.
Since contemporary agents are predominantly LLM-based, this reduces to identifying the capability boundary of large language models---tasks that require the flexible reasoning LLMs provide and that no fixed algorithm can replicate.
Fully characterizing this boundary remains an open research question; however, natural-language understanding (reading, comprehension, multi-step inference over novel text) is widely recognized as a capability that current LLMs possess and that no deterministic program can perform in the general case without an equivalent learned model.
We identify NLU as a promising candidate that satisfies all three criteria, and present the full justification and a concrete instantiation in \S\ref{sec:web-instantiation}.

\subsection{Problem Definition}
\label{sec:problem-definition}

Identifying a suitable asymmetric-hardness instantiation is necessary but not sufficient: the hardness needs to be embedded into a structured verification problem that simultaneously tests all three capability dimensions $\langle x, r, s \rangle$.
We formalize this as the \emph{Agentic Capability Verification Problem} (ACVP):

\emph{Given an entity~$e$ interacting through a designated action space
under timing threshold~$\tau$, determine whether
$\mathbf{c}(e) = \langle 1,1,1\rangle$.}

An ACVP challenge needs to test all three dimensions simultaneously. We formalize this through three \emph{necessity primitives}:

\begin{definition}[Action-Necessary ($x$)]\label{def:xnec}
A problem instance~$P$ is \emph{Action-Necessary} if a correct solution requires taking actions within a designated action space (e.g., issuing HTTP requests in a web environment, executing commands in a local environment).
\end{definition}

\begin{definition}[Reasoning-Necessary ($r$)]\label{def:rnec}
A problem instance~$P$ is \emph{Reasoning-Necessary} if a correct solution requires general reasoning, such as comprehension, planning or logical inference.
\end{definition}

\begin{definition}[Memory-Necessary ($s$)]\label{def:mnec}
A problem instance~$P$ is \emph{Memory-Necessary} if a correct solution requires information accumulated across prior interaction rounds.
\end{definition}

A problem instance that simultaneously satisfies all three primitives and tests the full extent of each capability constitutes a \emph{complete} ACVP challenge: under timing threshold~$\tau$, only an entity with $\langle 1,1,1\rangle$ can produce a correct solution, while any entity with $\mathbf{c}(e)\neq\langle 1,1,1\rangle$ will fail at least one primitive and thus be excluded.

However, constructing a complete ACVP challenge is neither tractable nor necessary: each capability dimension subsumes an open-ended class of behaviors (all possible action spaces, all forms of reasoning, all memory patterns), and no finite problem instance can test the full extent of all three simultaneously.
In practice, a concrete ACVP instance \emph{projects} each dimension into the challenge domain: $x$ onto a specific action space (e.g., HTTP interaction), $r$ onto a specific reasoning task (e.g., NLU), and $s$ onto a specific recall scope (e.g., volume of information to retain).
The chosen projections should be \emph{sufficient} to separate the three entity classes within the designated deployment environment, but need not be complete in an absolute sense.
For example, verifying an agent in a web environment requires demonstrating HTTP interaction, narrative comprehension, and cross-round recall---not controlling robotic actuators or solving domain-specific mathematical proofs, which fall outside the deployment context.
A projected ACVP therefore verifies that an entity is an agent \emph{within the projected context}. We present a concrete construction of such a projected ACVP in \S\ref{sec:web-instantiation}.
\section{aCAPTCHA Formalization}
\label{sec:formalization}

Building on the entity taxonomy and asymmetric-hardness foundation of \S\ref{sec:problem}, we now formalize \textsc{aCAPTCHA} (Agent CAPTCHA) as a security game that verifies an entity possesses the agentic capability vector $\langle 1,1,1\rangle$ within timing threshold~$\tau$.

\subsection{Formal Definition}
\label{sec:formal-definition}

We formally define \textsc{aCAPTCHA} as a security game whose security rests on the hardness of ACVP (\S\ref{sec:problem-definition}).

\begin{definition}[\textsc{aCAPTCHA} Security Game]\label{def:security-game}
We define the \textsc{aCAPTCHA} game $\mathsf{Game}^{\textsc{ac}}_{\tau}(\mathcal{A})$, which proceeds between a challenger~$\mathcal{C}$ (verifier) and a probabilistic interactive adversary~$\mathcal{A}$:
\begin{enumerate}
  \item \textbf{Setup.} $\mathcal{C}$ samples a fresh ACVP instance~$P$ and fixes the timing threshold~$\tau$.
  \item \textbf{Challenge.} $\mathcal{C}$ presents~$P$ to~$\mathcal{A}$ through the designated action space.
  \item \textbf{Judgment.} $\mathcal{C}$ outputs $1$ (accept) iff $\mathcal{A}$ produces the correct solution to~$P$ within~$\tau$.
\end{enumerate}
The adversary's \emph{advantage} is
\begin{equation}\label{eq:advantage}
  \mathsf{Adv}^{\textsc{ac}}_{\tau}(\mathcal{A})
  \;=\;
  \Pr\!\bigl[\mathsf{Game}^{\textsc{ac}}_{\tau}(\mathcal{A}) = 1\bigr],
\end{equation}
where the probability is over challenge sampling and $\mathcal{A}$'s random coins.
\textsc{aCAPTCHA} is \emph{$(\tau,\varepsilon)$-secure} if for every adversary~$\mathcal{A}$ with $\mathbf{c}(\mathcal{A}) \neq \langle 1,1,1\rangle$,
$\;\mathsf{Adv}^{\textsc{ac}}_{\tau}(\mathcal{A}) \le \varepsilon$.
\end{definition}

\begin{remark}
The security game can equivalently be viewed as an interactive proof system $(G, V, P)$ in the sense of the original CAPTCHA formalization~\cite{vonahn2003captcha}: $G$ generates ACVP instances, $V$ verifies correctness under~$\tau$, and $P$ is the prover under test. The challenger~$\mathcal{C}$ in Definition~\ref{def:security-game} combines the roles of $G$ and $V$; the adversary~$\mathcal{A}$ plays the role of~$P$. Where CAPTCHA instantiates this template with H-Easy~$\cap$~AI-Hard challenges, \textsc{aCAPTCHA} instantiates it with H-Hard~$\cap$~AI-Easy challenges (ACVP instances under~$\tau$).
\end{remark}

\subsection{Security Properties}
\label{sec:security-properties}

The security of \textsc{aCAPTCHA} reduces directly to the hardness of the underlying ACVP problem.
We characterize this security along two complementary dimensions: \emph{soundness} and \emph{completeness}.

\textbf{Soundness.}
An entity missing any capability dimension $\langle x, r, s \rangle$ will fail at least one necessity primitive of the ACVP instance within~$\tau$, and therefore should not pass the game.

\begin{assumption}[ACVP Hardness]\label{asm:acvp}
For any entity~$e$ with $\mathbf{c}(e) \neq \langle 1,1,1\rangle$, the probability that $e$ produces a correct solution to an ACVP instance within~$\tau$ is bounded by~$\varepsilon$:
\begin{equation}
\Pr\!\bigl[e \text{ solves ACVP instance under } \tau\bigr] \le \varepsilon.
\end{equation}
\end{assumption}

\begin{theorem}[Soundness]\label{thm:reduction}
If Assumption~\ref{asm:acvp} holds, \textsc{aCAPTCHA} is $(\tau,\varepsilon)$-secure:
\begin{equation}
\forall\,\mathcal{A}\;\text{with}\;\mathbf{c}(\mathcal{A}) \neq \langle 1,1,1\rangle:\quad \mathsf{Adv}^{\textsc{ac}}_{\tau}(\mathcal{A}) \le \varepsilon.
\end{equation}
\end{theorem}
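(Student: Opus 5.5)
The proof is a direct reduction: we identify the game's acceptance event with the event bounded in Assumption~\ref{asm:acvp}. Fix an arbitrary probabilistic interactive adversary~$\mathcal{A}$ with $\mathbf{c}(\mathcal{A}) \neq \langle 1,1,1\rangle$. We view $\mathcal{A}$ as an entity~$e$ interacting through the designated action space. This is legitimate because the capability vector and the taxonomy of \S\ref{sec:entity-model} are defined purely through externally observable behavior under~$\tau$, and are indifferent to implementation. So $\mathcal{A}$ is an entity with $\mathbf{c}(e)\neq\langle 1,1,1\rangle$ in exactly the sense Assumption~\ref{asm:acvp} quantifies over.

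Next, we unfold Definition~\ref{def:security-game}. The game outputs~$1$ if and only if $\mathcal{A}$ produces the correct solution to the freshly sampled ACVP instance~$P$ within~$\tau$. Hence the events $\{\mathsf{Game}^{\textsc{ac}}_{\tau}(\mathcal{A})=1\}$ and $\{e \text{ solves ACVP instance under } \tau\}$ coincide pointwise on the joint probability space. That space consists of the challenger's sampling of~$P$ (Setup) and $\mathcal{A}$'s random coins. By \eqref{eq:advantage},
\begin{equation*}
\mathsf{Adv}^{\textsc{ac}}_{\tau}(\mathcal{A}) = \Pr\bigl[\mathsf{Game}^{\textsc{ac}}_{\tau}(\mathcal{A})=1\bigr] = \Pr\bigl[e \text{ solves ACVP instance under } \tau\bigr] \le \varepsilon .
\end{equation*}
Since $\mathcal{A}$ was arbitrary among adversaries lacking some capability, this gives $(\tau,\varepsilon)$-security.

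The main obstacle, and the only nontrivial step, is checking that the two probability spaces really match. Assumption~\ref{asm:acvp} must be read as quantifying over the same distribution of instances that the challenger samples in Setup, and over the entity's internal randomness. It must also cover interactive, adaptive entities and not just one-shot solvers. I would make this explicit by stating that ``an ACVP instance'' in the assumption means an instance drawn from the challenger's generator~$G$ (in the interactive-proof view of the Remark), with the full multi-round interaction through the action space treated as the solving process.

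A second subtlety is that the timing threshold in the game must be the same~$\tau$ used to define $\mathbf{c}(\cdot)$. Otherwise an entity could have $\mathbf{c}=\langle 0,0,0\rangle$ under one threshold yet pass a game run with a looser one. I would fix this by noting that Setup fixes~$\tau$ and that the capability vector is evaluated under that same~$\tau$, which closes the reduction.
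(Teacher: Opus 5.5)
Your proof is correct and is essentially the paper's argument. The paper's reducer $\mathcal{R}$ only passes the sampled instance $P$ to $\mathcal{A}$ and forwards its answer, so its bound $\mathsf{Adv}^{\textsc{ac}}_{\tau}(\mathcal{A}) \le \Pr[\mathcal{R} \text{ solves ACVP}] \le \varepsilon$ amounts to your identification of the game's winning event with the event in Assumption~\ref{asm:acvp}. Applying the assumption to $\mathcal{A}$ directly, as you do, avoids the paper's unstated step that $\mathcal{R}$ has the same capability vector as $\mathcal{A}$, and your remarks on matching the instance distribution and $\tau$ make explicit what the paper leaves implicit.
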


\begin{proof}[Proof sketch]
Construct a reducer~$\mathcal{R}$ that, given any adversary~$\mathcal{A}$ winning $\mathsf{Game}^{\textsc{ac}}_{\tau}$, solves an ACVP instance as follows:
$\mathcal{R}$ receives a fresh ACVP instance~$P$, presents it to~$\mathcal{A}$ via the designated action space, and forwards $\mathcal{A}$'s solution.
If $\mathcal{A}$ wins (correct answer within~$\tau$), $\mathcal{R}$ outputs it as a valid ACVP solution.
Therefore:
\[
  \mathsf{Adv}^{\textsc{ac}}_{\tau}(\mathcal{A})
  \;\le\;
  \Pr[\mathcal{R} \text{ solves ACVP}]
  \;\le\;
  \varepsilon,
\]
where the last step follows from Assumption~\ref{asm:acvp} for any~$\mathcal{A}$ with $\mathbf{c}(\mathcal{A}) \neq \langle 1,1,1\rangle$.
\renewcommand{\qedsymbol}{}
\end{proof}

\textbf{Completeness.}
Conversely, the game should not reject entities that genuinely possess the full capability vector: a capable agent should pass with high probability.

\begin{theorem}[Completeness]\label{thm:completeness}
For any entity~$e$ with $\mathbf{c}(e) = \langle 1,1,1\rangle$, the game accepts with high probability:
\begin{equation}
  \Pr\!\bigl[\mathsf{Game}^{\textsc{ac}}_{\tau}(e) = 1\bigr] \;\ge\; 1 - \delta,
\end{equation}
where $\delta$ captures residual failure due to challenge-specific difficulty and timing variability rather than missing capabilities.
\end{theorem}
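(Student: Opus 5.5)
The plan is to mirror the soundness argument of Theorem~\ref{thm:reduction}, but in the opposite direction: we decompose the event of rejection instead of bounding acceptance. For an entity $e$ with $\mathbf{c}(e)=\langle 1,1,1\rangle$, the Judgment step of Definition~\ref{def:security-game} rejects only if one of two things happens. Either (i) $e$ outputs an incorrect solution to the sampled instance~$P$, or (ii) $e$ outputs a correct solution but its completion time $T_e$ exceeds~$\tau$. A union bound then gives
\[
\Pr\bigl[\mathsf{Game}^{\textsc{ac}}_{\tau}(e)=0\bigr]\le \Pr[\text{$e$ errs on $P$}] + \Pr[T_e>\tau],
\]
and I would set $\delta$ to be the sum of these two terms. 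The work is to argue that each term is small and that neither is caused by a missing capability.

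The first step handles the capability side. By the definition of the taxonomy in \S\ref{sec:entity-model}, $\mathbf{c}(e)=\langle 1,1,1\rangle$ means $e$ can act in the designated action space, reason over the received input, and retain cross-round state within~$\tau$. The projected ACVP instance of \S\ref{sec:problem-definition} requires only the projections of these capabilities: HTTP actions, NLU reasoning, and bounded recall. So Definitions~\ref{def:xnec}--\ref{def:mnec} impose no requirement that $e$ structurally lacks. It follows that any error in (i) is a stochastic per-instance failure, such as a misreading or an LLM sampling error, and not a structural one. I would define $\delta_{\mathrm{task}}=\sup_{e}\Pr_{P,\text{coins}}[\text{$e$ errs}]$ over agents in the projected context. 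This is an explicit completeness assumption, dual to Assumption~\ref{asm:acvp}. It is reasonable because the generator samples instances from a fixed, parametrically bounded difficulty distribution, as required by the practical deployability criterion.

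The second step bounds the timing term using the separation $T_{\mathrm{AI}}\ll\tau$ from \S\ref{sec:entity-model}. I would model $T_e$ as a sum over rounds of per-round latencies: network, inference, and action. Writing $\mu=\mathbb{E}[T_e]\le T_{\mathrm{AI}}$, Markov's inequality gives $\Pr[T_e>\tau]\le \mu/\tau$. If the per-round latencies are independent and bounded, a Hoeffding or Chernoff bound gives exponentially small tails in $(\tau-\mu)$. Either way, $\delta_{\mathrm{time}}$ becomes negligible as the gap $\tau/T_{\mathrm{AI}}$ grows, and we conclude $\delta=\delta_{\mathrm{task}}+\delta_{\mathrm{time}}$.

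The main obstacle is the first term. Unlike soundness, there is no prior stated assumption guaranteeing that full capability implies correct solutions, so the theorem is really conditional on a calibration assumption about instance difficulty. I would make that assumption explicit rather than hide it. I would also note that it is empirically checkable, which is exactly what the agent trials in the evaluation are meant to estimate.
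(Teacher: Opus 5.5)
Your proposal is correct and follows the same route as the paper's proof sketch. The paper likewise splits rejection of a $\langle 1,1,1\rangle$ entity into a correctness failure and a timing failure, bounds $\delta \le \delta_{\mathrm{hard}} + \delta_{\mathrm{latency}}$, and leaves both terms as instantiation-specific quantities to be calibrated empirically, which is exactly the conditional completeness assumption you make explicit. Your Markov/Hoeffding treatment of the timing term is an extra step the paper does not take, but at the paper's operating point ($\tau = 15$\,s against a $7.1$\,s median round time) Markov gives only a weak bound, so in practice that term has to be estimated from trials, as the paper does.
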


\begin{proof}[Proof sketch]
Because an ACVP instance tests each dimension of $\langle x,r,s\rangle$ through a projection (\S\ref{sec:problem-definition}), an entity possessing all three capabilities can solve each projected primitive within~$\tau$.
The residual failure probability~$\delta$ is bounded by
\[
  \delta \;\le\; \delta_{\mathrm{hard}} + \delta_{\mathrm{latency}},
\]
where $\delta_{\mathrm{hard}}$ is the probability that a capable agent fails to produce the correct solution (due to challenge ambiguity, reasoning error, or misinterpretation), and $\delta_{\mathrm{latency}}$ is the probability that the agent's end-to-end processing time exceeds~$\tau$ despite possessing the required capabilities (due to inference variability or transient network delay).
Both terms are properties of the concrete instantiation, not of the paradigm, and can be calibrated empirically (\S\ref{sec:eval-main}).
\renewcommand{\qedsymbol}{}
\end{proof}

\subsection{Threat Model}
\label{sec:threat-model}

We identify four realistic attacker profiles that may attempt to pass an \textsc{aCAPTCHA} challenge, and analyze how each is handled:

\begin{enumerate}
  \item \emph{Pure human} ($\langle 0,0,0\rangle$ under~$\tau$).
    The cognitive pipeline (read, comprehend, decide, act) cannot complete within~$\tau$, so all three capabilities are effectively zero under timing constraints.
  \item \emph{Human~+~LLM assistant.}
    If a human participates in \emph{any} link of the challenge chain, the bottleneck effect pushes total latency beyond~$\tau$.
    For the chain to complete within~$\tau$, the LLM needs to handle it end-to-end autonomously, which by definition makes it an agent.
  \item \emph{LLM API forwarding script} ($\langle 0,1,s\rangle$).
    Possesses reasoning and possibly cross-round state via the LLM's context window, but lacks an execution layer for external actions ($x{=}0$).
  \item \emph{Capability-complete adversary} ($\langle 1,1,1\rangle$).
    Any entity demonstrating the full capability vector under~$\tau$.
    Such an entity may be: (a)~a fully capable agent implementation with adversarial intent; (b)~a specialized LLM forwarding script augmented with tool-use interfaces that achieves capability completeness; or (c)~a human who, through external tooling, manages to satisfy all three primitives within~$\tau$.
\end{enumerate}

The security of \textsc{aCAPTCHA} ultimately depends on two classes of parameters: \emph{challenge difficulty}: the specific projections of $x$, $r$, and $s$ into the challenge domain, and the \emph{timing threshold~$\tau$}, which jointly determine how effectively the ACVP instance separates the three entity classes.
The first three profiles are excluded by soundness (Theorem~\ref{thm:reduction}): each fails at least one ACVP primitive under~$\tau$.
The fourth profile cannot be excluded by construction; instead, the operator faces a fundamental trade-off between soundness~($\varepsilon$) and completeness~($\delta$): stricter parameters (harder projections, tighter~$\tau$) lower~$\varepsilon$ but may raise~$\delta$ by excluding legitimate agents with slower inference or higher latency.
A concrete instantiation calibrates this trade-off to its deployment context; we present our methodology in \S\ref{sec:web-instantiation} and empirically evaluate it in \S\ref{sec:eval-main}.
\section{NLU-Based \textsc{aCAPTCHA}}
\label{sec:web-instantiation}

In this section, we present a concrete instantiation of \textsc{aCAPTCHA} that makes two design choices: (1)~asymmetric hardness is instantiated through \emph{time-bounded natural-language understanding (NLU)}, and (2)~the action space~$x$ is defined as the \emph{web action space}, requiring HTTP interaction to navigate endpoints, issue requests, and submit responses. Together, these choices yield an agent admission mechanism for the web environment.

\subsection{Hardness Model}
\label{sec:hardness-model}

We now justify this choice against the three suitability criteria established there.
Human NLU performance is governed by well-characterized cognitive-science constants~\cite{brysbaert2019,cowan2001,pashler1994}, yielding a modelable lower bound on human completion time whose primary parameter is the narrative length~$L$; narrative complexity (information scattering, reasoning depth) further increases human processing time beyond this bound (criterion~1).
NLU affords the design of narratives that defeat surface-level heuristics~\cite{jia2017adversarial,mccoy2019right}: by deliberately introducing coreference, multi-hop inference, and information scattering, challenges can be constructed to require genuine structured reasoning (criterion~2).
NLU challenges can be dynamically assembled from reusable modules and parameterized composition rules~\cite{weston2016babi,ribeiro2020checklist,vodrahalli2024michelangelo}, enabling on-demand generation without manual authoring; we present our construction in \S\ref{sec:acvp-construction} (criterion~3).
NLU therefore provides a suitable hardness basis: modelable, reasoning-necessary, and practically deployable.

\textbf{Cognitive basis of NLU hardness.}
The timing budget~$\tau$ that excludes humans rests on well-established cognitive science: human information processing faces hard physiological bottlenecks that LLMs do not share (see Table~\ref{tab:human-llm} for detailed comparison).
The asymmetry is structural: humans process information \emph{serially}---read the text, comprehend its meaning, select a response, and physically produce it, each stage constrained by independent physiological limits~\cite{brysbaert2019,cowan2001,pashler1994}.
An LLM collapses this serial pipeline: it ingests the entire input during a single prefill pass, reasons over it through parallel attention~\cite{vaswani2017attention}, and begins streaming an answer in one autoregressive pass.
The result is a separation of orders of magnitude in end-to-end latency for the same task.

\begin{table}[t]
\begin{threeparttable}
  \renewcommand{\tnote}[1]{\textsuperscript{#1}}
  \caption{Human cognitive limits vs.\ LLM-agent capabilities.}
  \label{tab:human-llm}
  \small
  \begin{tabular}{@{}
    >{\raggedright\arraybackslash}p{0.18\columnwidth}
    @{\hspace{1.5em}}
    >{\raggedright\arraybackslash}p{0.36\columnwidth}
    >{\raggedright\arraybackslash}p{0.33\columnwidth}@{}}
    \toprule
    \textbf{Dimension} & \textbf{Human} & \textbf{LLM Agent} \\
    \midrule
    Text ingestion &
      238\,wpm\tnote{1} silent reading~\cite{brysbaert2019}
      (${\approx}$5\,tps\tnote{3}) &
      Entire context window in one prefill pass ($<$1\,s) \\
    \addlinespace
    Comprehension ceiling &
      ${\le}$300\,wpm with full comprehension~\cite{carver1992} &
      Quality independent of speed \\
    \addlinespace
    Working memory &
      3--5 chunks\tnote{2}~\cite{cowan2001} &
      Full context window (128\,k--1\,M tokens) \\
    \addlinespace
    Choice reaction time &
      350--384\,ms per decision~\cite{pashler1994} &
      Single forward pass ($<$500\,ms) \\
    \addlinespace
    Decision scaling &
      +150\,ms/bit (Hick's law)~\cite{hick1952} &
      Constant (parallel attention)~\cite{vaswani2017attention} \\
    \addlinespace
    Text output &
      ${\approx}$40\,wpm typing~\cite{dhakal2018typing} &
      50--200\,tps (${\approx}$2\,k--9\,k\,wpm) \\
    \bottomrule
  \end{tabular}
  \begin{tablenotes}
    \footnotesize
    \item[1] wpm = words per minute.
    \item[2] Chunk: a grouped unit in working memory; size varies with expertise.
    \item[3] tps = tokens per second; 1 token ${\approx}$ 0.75 English words.
  \end{tablenotes}
\end{threeparttable}
\end{table}

\textbf{Timing pipeline model.}
We unify the three asymmetries above into a single quantitative model.
Consider a challenge that presents $L$~tokens of text and requires an answer of $A$~tokens.
The human processing pipeline is strictly serial:
\begin{equation}\label{eq:t-human}
  T_{\mathrm{human}}(L,A)
  \;\ge\;
  \underbrace{\frac{L}{R_{\mathrm{read}}}}_{\text{read}}
  \;+\;
  \underbrace{T_{\mathrm{PRP}}}_{\text{decide}}
  \;+\;
  \underbrace{\frac{A}{R_{\mathrm{type}}}}_{\text{act}},
\end{equation}
where $R_{\mathrm{read}}\!\approx\!5$~tokens/s~\cite{brysbaert2019},
$T_{\mathrm{PRP}}\!\ge\!350$~ms~\cite{pashler1994}, and
$R_{\mathrm{type}}\!\approx\!0.9$~tokens/s~\cite{dhakal2018typing}.
This is a \emph{lower} bound: it omits comprehension and reasoning time,
which add further delay for multi-constraint tasks~\cite{card1983}.
An LLM-based agent collapses the serial pipeline into three stages:
\begin{equation}\label{eq:t-llm}
  T_{\mathrm{LLM}}(L,A)
  \;\approx\;
  \underbrace{T_{\mathrm{prefill}}(L)}_{\text{read + reason}}
  \;+\;
  \underbrace{A \cdot T_{\mathrm{tok}}}_{\text{generate}}
  \;+\;
  \underbrace{T_{\mathrm{action}}}_{\text{act}},
\end{equation}
where $T_{\mathrm{prefill}}$ is the prefill computation time---the time to ingest and attend over the full $L$-token input in a single forward pass, scaling sub-linearly with~$L$; $T_{\mathrm{tok}}\!\approx\!20$--$50$~ms per output token; and $T_{\mathrm{action}}$ covers executing the generated action (e.g., an HTTP request).

\begin{figure}[t]
\centering
\includegraphics[width=\columnwidth]{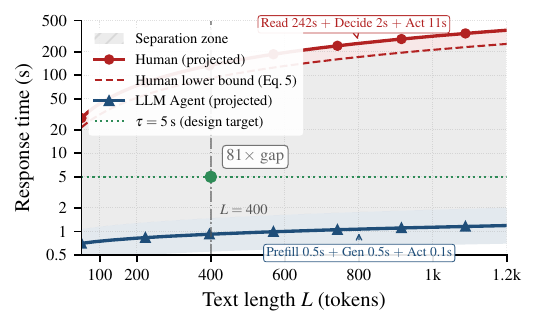}
\caption{Projected human--LLM timing separation gap by per-round narrative length~$L$.}
\label{fig:timing-scaling}
\end{figure}

\Cref{fig:timing-scaling} visualizes the resulting separation gap across increasing narrative lengths.
Human response time grows linearly with~$L$ due to serial cognitive processing~\cite{brysbaert2019}, while LLM agent time remains nearly constant as prefill scales sub-linearly; even moderate narrative lengths yield a separation exceeding two orders of magnitude.
As long as $T_{\mathrm{LLM}} \ll \tau \ll T_{\mathrm{human}}$ holds, any challenge parameterized by $(L, A, \tau)$ simultaneously excludes humans and admits agents.
Operators select~$L$ and~$\tau$ within the feasible region bounded below by~$T_{\mathrm{LLM}}$ and above by~$T_{\mathrm{human}}$; longer narratives widen this region and strengthen human exclusion.
Because human cognitive limits are physiological constants unlikely to change, and NLU hardness for scripts is an architectural constraint, this instantiation offers a fundamentally stable security basis.

\subsection{ACVP Construction}
\label{sec:acvp-construction}

Following the projection principle of \S\ref{sec:problem-definition}, we construct a projected ACVP for the web context by embedding NLU hardness into a \emph{semantically-driven multi-round HTTP interaction task}.
This projects $x$ onto HTTP endpoint navigation, $r$ onto time-bounded narrative comprehension, and $s$ onto cross-round recall via semantic chaining, yielding a problem instance that satisfies all three necessity primitives within~$\tau$.
The challenge comprises three rounds, each testing a progressively larger subset of the capability vector:

\begin{itemize}
  \item \emph{Round~1 (Action + Reasoning):}
    the entity receives a narrative~$\mathsf{narr}_1$ and question~$\mathsf{q}_1$ with no prior context. It comprehends the narrative and derives the answer~$a_1$, then submits it via HTTP within~$\tau_1$.
    No memory is required.
  \item \emph{Round~2 (Action + Reasoning + Memory$_1$):}
    the entity receives a subsequent narrative~$\mathsf{narr}_2$ and a question~$\mathsf{q}_2$; the~$\mathsf{narr}_2$ references~$\mathsf{narr}_1$ through anaphoric expressions and may present new evidence that revises the earlier conclusion. The entity retains previous context~$\mathcal{C}_1$ to correctly interpret the new narrative, derive~$a_2$, and submit it within~$\tau_2$.
  \item \emph{Round~3 (Action + Reasoning + Memory$_{1,2}$):}
    the entity receives the last narrative~$\mathsf{narr}_3$ and question~$\mathsf{q}_3$; the~$\mathsf{narr}_3$ references findings from \emph{both} prior rounds, deepening the memory test from single-step recall to cross-round synthesis. The entity retains all prior context~$\mathcal{C}_2$ to interpret the new narrative, derive~$a_3$, and submit it within~$\tau_3$.
\end{itemize}

\noindent While two rounds suffice to cover $\langle x, r, s \rangle$, the third round strengthens the problem by increasing the memory requirement and further widening the separation gap.

Formally, each round~$i$ delivers a narrative $\mathsf{narr}_i$ and a question $\mathsf{q}_i$; the entity derives the correct answer~$a_i$ through NLU and submits it via HTTP within~$\tau_i$.
We define the \emph{accumulated session context} after Round~$i$:
\begin{equation}
  \mathcal{C}_1 = \{\mathsf{narr}_1, \mathsf{q}_1\}, \qquad
  \mathcal{C}_i = \mathcal{C}_{i-1} \cup \{a_i, \mathsf{narr}_{i+1}, \mathsf{q}_{i+1}\},
\end{equation}
where the entity's own prior answers $a_1, \ldots, a_{i-1}$ form part of the context needed to interpret later narratives.
Acceptance demands both correctness and timeliness:
\begin{align}
  \hat{a}_i &= \mathsf{Derive}(\mathsf{narr}_i,\; \mathsf{q}_i,\; \mathcal{S}_{i-1}), \\
  \mathsf{Accept}_i &\;\iff\;
  (a_i = \hat{a}_i)
  \;\wedge\;
  (t_i^{\mathrm{eff}} \le \tau_i).
\end{align}
Global acceptance requires $\mathsf{Accept} = \bigwedge_{i=1}^{3} \mathsf{Accept}_i$.
An entity that completes all three rounds correctly within~$\tau$ has demonstrated $\langle 1,1,1\rangle$ under all three projected primitives, verifying it as an agent within the designated web deployment context.

\subsection{Challenge Generation}
\label{sec:challenge-generation}

The preceding section defined the structure of a single ACVP session; we now address how per-round narratives and questions are produced at scale.
The challenge corpus consists of pre-generated \emph{narrative sets}, each forming a coherent \emph{three-part case} within a single technical domain.
Each narrative set comprises three thematic parts; each part contains a dense technical narrative embedding multiple information points, from which $q$ \emph{question--answer pairs} (QA pairs) are derived, each pairing a comprehension question with a deterministic short-string answer (e.g., a compound name, a batch code, or a numeric value).
All answers are pre-computed at generation time; the verifier checks correctness through simple string comparison with negligible per-session cost.

\textbf{Runtime composition.}
At session time, the verifier first samples a narrative set:
\begin{equation}\label{eq:sample-narrative}
  \mathsf{N} \leftarrow \mathsf{Sample}(\mathsf{corpus},\; \mathsf{domain}),
\end{equation}
and then assembles the per-session challenge by selecting one QA pair per part uniformly at random:
\begin{equation}\label{eq:assemble-challenge}
  \mathsf{challenge}_i = \bigl(\mathsf{N}.\mathsf{narr}_i,\; \mathsf{N}.\mathsf{QA}_{i,\,j_i}\bigr), \quad j_i \overset{\$}{\leftarrow} [q],\; i \in \{1,2,3\}.
\end{equation}

Three design choices govern challenge construction, each operating at a different level:

\emph{(i) Anti-parsing by design.}
To defeat surface-level heuristics (lexical overlap, positional bias, sentiment cues~\cite{jia2017adversarial,mccoy2019right}), modules embed three properties: \emph{implicit distinctions}---all candidates are described in uniformly positive terms, differing only through subtle domain-specific cues; \emph{information scattering}---critical attributes are distributed across non-adjacent sections; and \emph{misleading preliminary conclusions}---plausible but incorrect intermediate conclusions that reward shallow reading with wrong answers.

\emph{(ii) Knowledge diversity.}
Domains are drawn from the OECD Fields of Research and Development (FORD) classification~\cite{oecd2015frascati} (e.g., biochemistry, cybersecurity, detailed in \cref{app:domains}), each maintaining independent module pools.
Domain diversity strengthens human exclusion and hinders cross-domain parsing heuristics.

\emph{(iii) Compositional randomization.}
Each session samples a narrative set and selects one QA pair per part, yielding $D \cdot N \cdot q^3$ distinct configurations ($D$ domains, $N$ sets, $q$ QA pairs per part), rendering replay and pre-computation infeasible.
\subsection{Security Parameters}
\label{sec:timing}

The instantiation's security and strictness are governed by a set of tunable parameters that operators configure per deployment.
\Cref{tab:security-params} summarizes all parameters; we discuss each group below.

\begin{table}[t]
  \caption{\textsc{aCAPTCHA} protocol security parameters.}
  \label{tab:security-params}
  \small
  \begin{tabular}{@{}
    >{\raggedright\arraybackslash}p{0.13\columnwidth}
    >{\raggedright\arraybackslash}p{0.37\columnwidth}
    >{\raggedright\arraybackslash}p{0.37\columnwidth}@{}}
    \toprule
    \textbf{Parameter} & \textbf{Description} & \textbf{Controls} \\
    \midrule
    $\tau_i$ & Per-round timing budget & Human exclusion \\
    $\alpha$ & Safety margin ($<\!1$) & Threshold strictness \\
    $T_{\mathrm{total}}$ & Session timeout & Max session duration \\
    $L$ & Narrative length (tokens) & Separation gap \\
    $D$ & Number of domains & Domain diversity \\
    $N$ & Narrative sets per domain & Corpus size \\
    $q$ & Questions per part & \makecell[l]{Randomization\\($D \!\cdot\! N \!\cdot\! q^3$ configs)} \\
    \bottomrule
  \end{tabular}
\end{table}

\textbf{Timing parameters ($\tau_i$, $\alpha$, $T_{\mathrm{total}}$).}
The per-round timing budgets~$\tau_i$ are the primary human-exclusion mechanism: they enforce the separation inequality $T_{\mathrm{LLM}} \ll \tau \ll T_{\mathrm{human}}$.
For a per-round narrative of $L$~words, each~$\tau_i$ is upper-bounded by:
\begin{equation}\label{eq:tau-calibration}
  \tau_i \;\leq\; \alpha \cdot \Bigl(\frac{L}{R_{\mathrm{read}}} + T_{\mathrm{comprehend}} + T_{\mathrm{respond}}\Bigr),
\end{equation}
where $R_{\mathrm{read}} \approx 238$\,wpm~\cite{brysbaert2019}, $T_{\mathrm{comprehend}}$ accounts for reasoning over scattered information under working-memory constraints, $T_{\mathrm{respond}}$ covers answer formulation and submission, and $\alpha < 1$ is a safety margin ensuring that even the fastest human cannot complete the task within~$\tau_i$.
The total session timeout~$T_{\mathrm{total}}$ bounds the entire three-round interaction; sessions exceeding~$T_{\mathrm{total}}$ are invalidated regardless of per-round compliance.
All~$\tau_i$ values are design targets subject to empirical calibration across agent implementations, LLM API providers, and network regions.

\textbf{Challenge parameters ($L$).}
Longer narratives widen the feasible~$\tau$ region (\S\ref{sec:hardness-model}) and strengthen human exclusion, but impose a trade-off: each additional token increases the LLM inference cost borne by the agent under verification, alongside higher generation cost and verifier bandwidth.
Beyond length, the narrative-level design choices described in \S\ref{sec:acvp-construction} (distractor density, reasoning-type diversity, information scattering) further determine NLU difficulty.

\textbf{Corpus parameters ($D$, $N$, $q$).}
The corpus composition parameters determine the combinatorial challenge space.
Each session samples one narrative set and selects one QA pair per part, yielding $D \cdot N \cdot q^3$ distinct configurations.
A sufficiently large combinatorial space renders replay attacks and answer precomputation infeasible: even with partial corpus leakage, an adversary faces exponential uncertainty in the specific configuration drawn for any given session.
Domain diversity ($D$) strengthens human exclusion, since human expertise is inherently domain-specific; a challenge drawn from an unfamiliar domain further widens the effective~$T_{\mathrm{human}}$.

Together, the three parameter groups interact: $L$ and~$\alpha$ control timing-based human exclusion, $D \cdot N \cdot q^3$ controls anti-replay strength, and narrative complexity controls NLU difficulty.

\section{Preliminary Evaluation}
\label{sec:eval-main}

We present preliminary evaluation results from small-scale trials to validate the protocol's core properties: that genuine agents complete the protocol reliably, and that the timing separation between agents and humans is wide enough to support robust threshold placement.

\subsection{Evaluation Setup}
\label{sec:eval-setup}

We implement a prototype \textsc{aCAPTCHA} system comprising a challenge generation pipeline and a verifier server, and use it to conduct the trials reported below.

\textbf{Challenge corpus.}
We generate a small evaluation corpus using Claude Opus-4.6 as the authoring LLM, following the generation pipeline of \S\ref{sec:challenge-generation}.
The corpus spans $D{=}5$ domains drawn from the OECD FORD classification~\cite{oecd2015frascati} (e.g. biochemistry, cybersecurity, epidemiology; see \Cref{app:domains}).
Each domain contributes $N{=}4$ narrative sets; each set comprises three thematically linked parts with $q{=}3$ QA pairs per part, yielding $5 \times 4 \times 3^3 = 540$ distinct session configurations.
Per-part narrative lengths range from 352 to 1{,}124 tokens (mean $\bar{L} = 682$); answers are deterministic short strings ($\le$20 characters).
All narratives embed the anti-parsing properties described in \S\ref{sec:acvp-construction}: implicit distinctions, information scattering, and misleading preliminary conclusions.

\textbf{Verifier server.}
The verifier is implemented as a stateful HTTP server (Python / FastAPI) that maintains all session state server-side.
Each session is bound to a single-use identifier $\mathsf{session\_id} = \mathsf{HMAC}(k_{\mathsf{srv}},\; t_{\mathsf{now}} \| \mathsf{nonce}_{\mathsf{srv}})$, invalidated after completion or total timeout $T_{\mathrm{total}}{=}120$\,s.
At session initiation, the verifier samples a narrative set and selects one QA pair per part (Eq.~\ref{eq:sample-narrative}--\ref{eq:assemble-challenge}).
Each round delivers the narrative and question, then checks the submitted answer via case-insensitive Unicode-normalized string matching, with early-exit on failure.
The per-round timing budget is set to $\tau{=}15$\,s.
Effective response time is computed as $t_i^{\mathrm{eff}} = t_i^{\mathrm{resp}} - t_{\mathrm{RTT}}$, where $t_{\mathrm{RTT}}$ is derived from the TCP handshake rather than an application-layer exchange, raising the bar for RTT inflation attacks (an adversary would need to manipulate the OS-level TCP stack rather than merely inserting application-level delays).
\Cref{alg:acap-verify} summarizes the per-session procedure.

\begin{algorithm}[t]
\caption{$\mathsf{aCAPTCHA\text{-}Verify}$: Verifier session procedure.}
\label{alg:acap-verify}
\DontPrintSemicolon
\SetKwInOut{Input}{Input}
\SetKwInOut{Output}{Output}
\SetKwFunction{Send}{Send}
\SetKwFunction{Receive}{Receive}
\SetKwFunction{Sample}{SampleNarrative}
\SetKwFunction{Clock}{Clock}
\Input{$\mathsf{sid}$: session identifier;\; $\tau_1,\tau_2,\tau_3$: per-round budgets;\; $T_{\mathrm{total}}$: session timeout}
\Output{$\mathsf{Accept}$ or $\mathsf{Reject}$}
\BlankLine
\tcp{Sample narrative set and select one QA pair per part}
$(\mathsf{C}, q_1, q_2, q_3) \leftarrow$ \Sample{corpus}\;
$\hat{a}_i \leftarrow \mathsf{C}.\mathsf{part}_i.\mathsf{questions}[q_i].\mathsf{answer}$ for $i \in \{1,2,3\}$\;
\BlankLine
\tcp{RTT from TCP handshake (transparent, no client interaction)}
$t_{\mathrm{RTT}} \leftarrow t_{\mathrm{request}} - t_{\mathrm{accept}}$\;
\BlankLine
\tcp{Send Round 1: narrative + question}
\Send{$\mathsf{C}.\mathsf{part}_1.\mathsf{narrative},\; \mathsf{C}.\mathsf{part}_1.\mathsf{questions}[q_1]$} to entity\;
\BlankLine
\tcp{Rounds 1--3: Challenge}
\For{$i \leftarrow 1$ \KwTo $3$}{
  $t_{\mathrm{start}} \leftarrow$ \Clock{}\;
  \Receive{$a_i$} from entity\;
  $t_i^{\mathrm{eff}} \leftarrow (\text{\Clock{}} - t_{\mathrm{start}}) - t_{\mathrm{RTT}}$\;
  \If{$a_i \neq \hat{a}_i$ \textbf{or} $t_i^{\mathrm{eff}} > \tau_i$}{
    \Return $\mathsf{Reject}$\;
  }
  \If{$i < 3$}{
    \Send{$\mathsf{C}.\mathsf{part}_{i+1}.\mathsf{narrative},\; \mathsf{C}.\mathsf{part}_{i+1}.\mathsf{questions}[q_{i+1}]$} to entity\;
  }
}
\Return $\mathsf{Accept}$\;
\end{algorithm}

\textbf{Agent under test.}
We use Claude Code (backed by Claude Opus~4.6~\cite{anthropic2026claude}, configured as high effort) as the agent under test.
The agent receives only a natural-language task description and the verifier endpoint URL; no custom \textsc{aCAPTCHA} client code or protocol-specific tooling is provided.
The agent autonomously navigates the multi-round HTTP interaction using its built-in tool-use capabilities.

\subsection{Agent Verification Trials}
\label{sec:eval-agent-trials}

We conduct small-scale trials to verify that genuine agents can pass the \textsc{aCAPTCHA} protocol using the setup described above.
Across 20 independent sessions, the protocol records $n{=}47$ individual round observations (some sessions terminated early on incorrect answers).
The agent median effective response time per round is $t^{\mathrm{eff}}_{\mathrm{P50}} = 7.1$\,s, well within the per-round timing budget $\tau{=}15$\,s.
These preliminary results confirm that a production-grade agent completes the three-round protocol reliably and with substantial timing margin.

\subsection{Timing Separation}
\label{sec:eval-human}

Scripts ($r{=}0$) and bare LLM endpoints ($x{=}0$) are excluded by architectural construction.
The remaining question is whether the separation inequality $T_{\mathrm{LLM}} \ll \tau \ll T_{\mathrm{human}}$ holds on actual \textsc{aCAPTCHA} challenges, confirming that the timing threshold excludes humans.

\textbf{Theoretical lower bound.}
We instantiate the human timing model of Eq.~\ref{eq:t-human} with the actual corpus parameters.
Per-part narrative lengths in the evaluation corpus range from 352 to 1{,}124 tokens (mean $\bar{L} = 682$); the maximum answer length is $A \le 10$~tokens.
Substituting the mean into Eq.~\ref{eq:t-human}:
\begin{equation}\label{eq:t-human-concrete}
  T_{\mathrm{human}}^{\mathrm{lb}} \;\ge\; \frac{682}{5.0} + 0.35 + \frac{10}{0.9}
  \;\approx\; 148\text{\,s per round},
\end{equation}
yielding a minimum total human time of $3 \times 148 = 444$\,s for the full three-round session.
This is a strict lower bound: it assumes silent reading speed ($R_{\mathrm{read}} = 5.0$\,tps~\cite{brysbaert2019}), a single PRP bottleneck ($T_{\mathrm{PRP}} = 0.35$\,s~\cite{pashler1994}), and expert typing ($R_{\mathrm{type}} = 0.9$\,tps~\cite{dhakal2018typing}), with zero comprehension overhead, zero re-reading, and perfect first-attempt accuracy---none of which hold in practice for the multi-constraint reasoning tasks in \textsc{aCAPTCHA} narratives.

\textbf{Projected human distribution.}
We estimate the human completion time distribution via Monte Carlo simulation grounded in the cognitive-science parameters of \S\ref{sec:hardness-model}.
Specifically, we sample $N{=}500$ hypothetical human attempts.
For each sample, we draw the narrative length $L \sim \mathrm{Uniform}(352, 1{,}124)$ from the corpus range, the comprehension-reading speed $R_{\mathrm{comp}} \sim \mathcal{N}(3.3, 0.5)$\,tps (clipped to $[2.5, 5.0]$, reflecting technical-prose comprehension~\cite{carver1992}), multi-step reasoning time $T_{\mathrm{decide}} \sim \mathrm{Uniform}(1.5, 3.0)$\,s (accounting for Hick's-law scaling over 3--6 decision points~\cite{pashler1994}), and comprehension overhead $T_{\mathrm{comp}} \sim \mathrm{Uniform}(5, 20)$\,s (re-reading, working-memory refreshing).
The per-round human time is then:
\[
  T_{\mathrm{human}} = \frac{L}{R_{\mathrm{comp}}} + T_{\mathrm{decide}} + \frac{A}{R_{\mathrm{type}}} + T_{\mathrm{comp}}.
\]
This yields a projected distribution with median $\approx 250$\,s and a lower tail starting near ${\sim}100$\,s, consistent with the analytical lower bound of Eq.~\ref{eq:t-human-concrete}.
\Cref{fig:timing-separation} visualizes the timing separation on a logarithmic scale.
The agent $t^{\mathrm{eff}}$ distribution (left, blue; $n{=}47$ observed rounds shown as scatter points) clusters around a median of 7.1\,s; the threshold~$\tau{=}15$\,s (dashed green) sits comfortably above; the theoretical human lower bound $T_{\mathrm{human}}^{\mathrm{lb}} \approx 148$\,s (dotted red) and the projected human distribution (right, red) lie far beyond.
The separation between the agent median and the human lower bound spans a factor of ${\approx}21\times$, confirming that~$\tau$ can be placed with substantial margin on both sides.

\begin{figure}[t]
\centering
\includegraphics[width=\columnwidth]{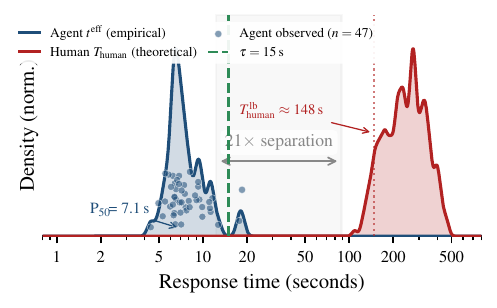}
\caption{Timing separation on \textsc{aCAPTCHA} challenges. Agent times are empirical; human times are simulated (\S\ref{sec:eval-human}).}
\label{fig:timing-separation}
\end{figure}

\subsection{Threshold Sensitivity}
\label{sec:eval-sensitivity}

The preceding analysis establishes that agents and humans occupy well-separated timing regimes.
We now ask whether the operating point~$\tau$ needs to be precisely calibrated, or whether a wide range of~$\tau$ values simultaneously achieves high completeness. We sweep~$\tau$ across the range $[1, 600]$\,s and compute two metrics at each point:
(1)~agent session pass rate, defined as the fraction of sessions where the maximum per-round $t^{\mathrm{eff}} \le \tau$ (smoothed via a lognormal fit to the observed session-max distribution); and
(2)~human per-round completion probability, defined as the fraction of Monte Carlo samples with $T_{\mathrm{human}} \le \tau$.

\begin{figure}[t]
\centering
\includegraphics[width=\columnwidth]{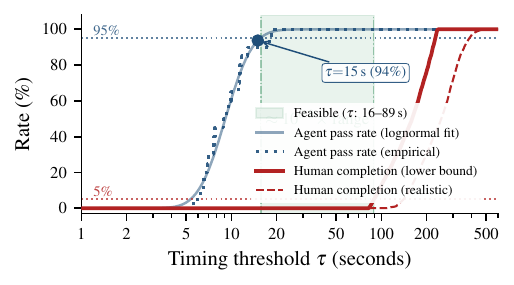}
\caption{Threshold sensitivity analysis. Agent curve fitted from empirical data; human curves from simulation.}
\label{fig:threshold-sensitivity}
\end{figure}

\Cref{fig:threshold-sensitivity} plots both curves.
The agent pass rate (blue) rises sharply and saturates near 100\% once~$\tau$ exceeds ${\sim}16$\,s---the tail of the agent session-max distribution.
At the operating point $\tau{=}15$\,s the fitted pass rate is 94\%; the empirical step function (faint dotted) confirms 90\% (18 of 20 sessions).
The human completion probability (red) remains at zero until~$\tau$ approaches ${\sim}80$\,s---the fastest Monte Carlo sample under the lower-bound model.
Between these two thresholds lies a feasible operating region spanning approximately $10^{0.8}\!\times$ ($\tau$: 16--89\,s) on a logarithmic scale, within which agent TPR exceeds 95\% and human FPR remains below 5\%.
This wide safe zone confirms that~$\tau$ is not a fragile operating point: operators can set~$\tau$ anywhere within the feasible region without materially degrading either completeness or soundness.

\section{Discussion}
\label{sec:discussion}

We examine how \textsc{aCAPTCHA} fits into broader infrastructure, where it can evolve, and where its instantiation can be strengthened.

\subsection{Deployment Strategy}
\label{sec:deployment}

\textsc{aCAPTCHA} is one layer in a trust stack, not a standalone access-control mechanism.
Identity protocols (OAuth~\cite{oauth2}, API keys, WebAuthn~\cite{webauthn2}) answer \emph{which} entity is making a request; \textsc{aCAPTCHA} answers whether that entity is a genuine agent.
Neither subsumes the other: a valid OAuth token does not prove agent capability, and a passed \textsc{aCAPTCHA} challenge does not establish persistent identity.
Transport-layer mechanisms (mTLS) secure the channel but are similarly orthogonal to entity-type verification.
In emerging agent-to-agent protocols such as A2A~\cite{google2025a2a} and MCP~\cite{mcp}, \textsc{aCAPTCHA} can serve as an entity-type gate at service boundaries: before an agent is granted access to tools or delegated tasks, it first demonstrates $\langle 1,1,1\rangle$ under~$\tau$.
This positions \textsc{aCAPTCHA} as a composable admission layer that slots between transport security and application-level authorization.

The NLU-based instantiation (\S\ref{sec:web-instantiation}) supports two deployment models.
In the \emph{self-hosted} model, the service operator runs its own \textsc{aCAPTCHA} verifier and maintains a private challenge corpus, retaining full control over timing thresholds, domain selection, corpus rotation schedules, and security parameters.
This model suits high-security environments or operators with domain-specific challenge requirements.
In the \emph{centralized-authority} model, a third-party provider operates the \textsc{aCAPTCHA} service on behalf of multiple clients, analogous to how reCAPTCHA~\cite{recaptchav3} operates today for human verification.
This lowers per-operator burden through shared corpus generation, cross-client quality assurance, and continuous challenge evolution, at the cost of trust delegation to the authority and a shared point of failure.

\subsection{Mutual Agent Verification}
\label{sec:mutual-verification}

The NLU-based instantiation (\S\ref{sec:web-instantiation}) operates as a unidirectional protocol: a server verifies a client.
However, the \textsc{aCAPTCHA} paradigm itself is not inherently unidirectional.
In agent-to-agent scenarios (task delegation, autonomous negotiation, multi-agent coordination), both parties may need assurance that the counterpart is a genuine agent before committing resources or sharing information.
\textsc{aCAPTCHA} can be instantiated as a bidirectional verification protocol in which both sides simultaneously act as prover and verifier.
Each party generates a challenge for the other and solves the challenge it receives; both complete within~$\tau$.

A natural implementation path is to package the verifier side of \textsc{aCAPTCHA} as a learnable agent \emph{skill}: a set of instruction files that an agent reads and follows using its existing capabilities.
Concretely, the skill comprises two components: challenge generation guidelines and answer verification logic (\Cref{app:skill}).
An agent that acquires this skill can verify any counterpart: it generates a fresh challenge on the fly using its own LLM capabilities, delivers it round by round, and judges the response.
Crucially, the prover side requires no special skill: solving an NLU challenge within~$\tau$ is within any genuine agent's baseline capabilities by definition.
When both parties hold the skill, mutual verification proceeds in parallel; when only one does, unidirectional verification still applies.
This yields a peer-to-peer verification model that requires no centralized authority: any two agents can establish mutual capability assurance on first encounter, providing a capability baseline even in the absence of prior identity or trust.
The skill can be fully public without compromising security, because the protocol's security rests on per-session challenge freshness, not on secrecy of the generation procedure.
Once mutual verification establishes that both parties satisfy $\langle 1,1,1\rangle$, identity protocols can bind the verified capability to a persistent principal for subsequent interactions.

\subsection{Alternative ACVP Instantiations}
\label{sec:alternative-instantiations}

The formal framework (\S\ref{sec:problem-definition}) defines ACVP through \emph{projections}: each capability dimension is mapped onto a concrete challenge domain ($x$ onto an action space, $r$ onto a reasoning task, $s$ onto a recall scope).
The NLU-based instantiation (\S\ref{sec:web-instantiation}) is one choice---projecting $r$ onto time-bounded narrative comprehension, $x$ onto HTTP interaction, and $s$ onto cross-round session state---but the framework admits other projections that satisfy the same three suitability criteria of \S\ref{sec:asymmetric-hardness}.

\textbf{Alternative reasoning projections.}
Code comprehension (reasoning about program semantics under time pressure) and structured mathematical derivation both satisfy modelable hardness and reasoning necessity (\S\ref{sec:asymmetric-hardness}), offering alternative $r$-projections.
Each candidate, however, requires its own formal analysis: the human timing lower bound needs to be re-derived from domain-specific cognitive constants (e.g., code reading speed, mathematical symbol processing rate), and one needs to establish that deterministic scripts cannot solve the task without genuine reasoning---that is, the challenge should be \emph{Reasoning-Necessary} (Definition~\ref{def:rnec}), not merely computationally intensive.
For instance, code comprehension should resist static-analysis shortcuts (pattern matching on variable names, syntactic heuristics) in the same way NLU challenges resist keyword extraction; mathematical challenges should require multi-step inference rather than formula substitution that a script could execute symbolically.
These could be combined with NLU in a mixed-modality challenge that draws the reasoning task from a larger pool, further expanding the combinatorial challenge space.

\textbf{Multimodal projections.}
As multimodal foundation models mature, new sources of asymmetric hardness become available.
A visual challenge might embed reasoning targets within dense technical diagrams or annotated micrographs, requiring the agent to perceive, interpret, and reason over visual content within~$\tau$.
An auditory variant could present information through synthesized speech at rates far exceeding human comprehension, testing audio parsing and cross-modal reasoning.
Multimodal projections multiply the dimensions of asymmetry available to verifiers without requiring changes to the formal framework.
The key challenge lies in establishing modelable hardness (criterion~1 of \S\ref{sec:asymmetric-hardness}): while human reading speed is well-characterized by decades of psycholinguistic research~\cite{brysbaert2019}, analogous constants for visual diagram parsing or rapid audio comprehension are less established, making it harder to derive tight lower bounds on~$T_{\mathrm{human}}$ and thus to calibrate~$\tau$ with the same confidence.

\textbf{Alternative action spaces.}
The current instantiation projects action space~$x$ onto HTTP interaction, but other action spaces are equally valid: an MCP-native instantiation could require tool invocation sequences, while a file-system variant could require navigating directory structures and reading configuration files. The choice of action space determines the deployment context, provided the action primitive remains \emph{Action-Necessary} (Definition~\ref{def:xnec}): the entity executes external actions that a bare reasoning engine cannot perform without an execution layer.

\textbf{Practical deployability.}
Beyond formal soundness, any alternative instantiation should satisfy criterion~3 of \S\ref{sec:asymmetric-hardness}: challenges should be automatically generatable, parametrically adjustable, and verifiable at low cost.
The NLU-based instantiation achieves this through LLM-driven corpus generation with pre-computed answers, enabling verification via simple string comparison with negligible per-session overhead.
Alternative projections face analogous requirements: code comprehension challenges need an automated generation pipeline that produces semantically rich programs with deterministic answers, and visual challenges require a rendering pipeline that embeds reasoning targets into diagrams without manual authoring.
The verification cost should also remain bounded: challenges whose answers require non-trivial computation to check (e.g., executing generated code to verify output) introduce latency and attack surface that string-matching verification avoids.

More broadly, the asymmetry underlying \textsc{aCAPTCHA} strengthens over time.
As foundation models improve---faster inference, longer context windows, stronger reasoning, native multimodality---while human cognitive and perceptual limits remain physiological constants, the separation gap widens, yielding an increasingly reliable separation and a wider feasible $\tau$-window for operators.
This trajectory contrasts sharply with traditional CAPTCHA, whose security eroded as AI capabilities approached and surpassed human performance on the same tasks.

\subsection{Limitations}
\label{sec:limitations}

While the formal framework and protocol design are general, the current \textsc{aCAPTCHA} instantiation carries several practical limitations that scope the strength of our claims.

\textbf{Modeled human exclusion.}
The human timing lower bound (\S\ref{sec:eval-human}) is derived from well-established cognitive-science parameters (reading speed, PRP bottleneck, typing speed~\cite{brysbaert2019,pashler1994,dhakal2018typing}) rather than a controlled human-subjects experiment on actual \textsc{aCAPTCHA} challenges.
The resulting separation exceeds an order of magnitude and the underlying constants are drawn from high-citation empirical studies, but they model isolated cognitive stages; real human performance on multi-constraint narrative comprehension tasks may differ from the composite prediction.
A formal IRB-approved human trial would strengthen the empirical grounding and allow direct calibration of~$\tau$ against observed human completion times.

\textbf{Threshold trade-off.}
\textsc{aCAPTCHA} is binary: an entity either demonstrates $\langle 1,1,1\rangle$ within~$\tau$ or it does not.
Legitimate agents backed by smaller models or higher-latency API providers may fail to complete challenges within~$\tau$, producing false negatives.
Operators face a trade-off: a generous~$\tau$ accommodates diverse agent implementations but narrows the margin against human completion; a strict~$\tau$ strengthens human exclusion but risks excluding under-resourced agents.
The wide feasible $\tau$-region identified in \S\ref{sec:eval-sensitivity} mitigates this tension but does not eliminate it.

\textbf{Corpus quality dependence.}
The protocol's security depends substantially on the quality and randomness of the challenge corpus.
Poorly constructed narratives (those admitting keyword-extractable answers or containing ambiguous questions) weaken NLU-Necessity and may allow scripts to pass.
The corpus is also finite and needs to be periodically rotated to maintain anti-replay strength (\S\ref{sec:challenge-generation}): once a narrative set is exposed, it should be retired.
In the self-hosted deployment model, operators bear the ongoing cost of generation, cross-validation, and retirement; the centralized-authority model (\S\ref{sec:deployment}) amortizes this cost across clients but introduces a trust dependency on the corpus provider.

\section{Conclusion}
\label{sec:conclusion}

We introduced \textsc{aCAPTCHA} (Agent CAPTCHA), a time-constrained security game for agent admission grounded in the asymmetric hardness between human cognition and AI processing.
We formalized entity-type verification through a three-class taxonomy based on a verifiable agentic capability vector $\langle x, r, s \rangle$ evaluated under timing threshold~$\tau$, defined the Agentic Capability Verification Problem (\textsc{ACVP}), and proved soundness and completeness via reduction to three necessity primitives.
We presented an NLU-based \textsc{aCAPTCHA} instantiation and validated its soundness and completeness through preliminary evaluation.
\textsc{aCAPTCHA} addresses the previously open problem of \emph{agent admission}, providing a composable, infrastructure-free primitive for verifying whether an interacting entity is a genuine AI agent.

\bibliographystyle{ACM-Reference-Format}
\bibliography{references}
\appendix

\section{Challenge Generation Prompt Template}
\label{app:prompt}

The following listing presents the complete generation prompt template used by the \textsc{aCAPTCHA} challenge generation pipeline (\S\ref{sec:challenge-generation}).
Template variables prefixed with \texttt{\$} (e.g., \texttt{\$min\_narrative\_tokens}) are substituted with the concrete parameter values specified in \S\ref{sec:challenge-generation} before submission to the LLM.
After substitution, a domain-specific suffix is appended that provides the domain description, representative entity examples, and typical reasoning patterns (see \Cref{app:domains}).

\begin{lstlisting}
# Generate Three-Part NLU Challenge.

Generate a three-part narrative sequence in the specified domain. The challenge must satisfy three properties: 

1. A capable LLM agent can answer correctly through careful reading and reasoning.
2. A script without NLU cannot extract the answer through pattern matching, regex, or keyword search.
3. A human cannot answer within the time budget due to information density and working memory demands.

## Three-Part Structure

- **Part 1**: Standalone technical narrative. Answerable from Part 1 text alone.
- **Part 2**: Follow-up that references Part 1's findings via **anaphoric expressions** (e.g., "the enzyme previously identified as anomalous"). May build on, contradict, or reuse Part 1's conclusions. Questions require knowing Part 1's answers.
- **Part 3**: Synthesis referencing **both** prior parts via indirect references. Questions require both prior answers.

Parts 2 and 3 must NEVER explicitly state prior answers — only indirect references that require prior knowledge to resolve.

## Per-Part Narrative Constraints

Each part: **$min_narrative_tokens–$max_narrative_tokens tokens**. Total: ${total_min_tokens}–${total_max_tokens} tokens.

1. **Entity density**: $min_entities–$max_entities candidate entities. ALL described in positive/neutral terms — no explicit labels ("confirmed", "ruled out", "target", "most likely"). Distinctions must be IMPLICIT through temporal persistence, replication scope, causal attribution, or deviation scope.
2. **Misleading content**: ≥$min_misleading_paragraphs paragraph(s) per part presenting a preliminary conclusion later contradicted or revised.
3. **Numeric embedding**: ≥$min_numeric_values numeric values per part naturally in prose; mix of relevant and distractor.
4. **Information scattering**: Key info across ≥$min_info_pieces non-adjacent paragraphs. Each answer requires multi-paragraph synthesis.
5. **Format variety**: Vary document format freely (lab report, memo, audit log, email thread, field notes, incident report, etc.). Parts may use different formats.
6. **Anti-parsing**: Use synonyms/paraphrases for key terms; refer to same entity by different names across paragraphs; embed info in subordinate clauses; use domain jargon requiring comprehension.

## Questions

**$min_questions–$max_questions questions per part**. Each must:
- Require multi-step reasoning, not keyword search
- Have exactly ONE deterministic, unambiguous answer (short string, ≤$max_answer_length chars)
- Describe WHAT to find semantically, not WHERE in the text
- Use a different reasoning type from others in the same part

Answer type diversity per part: ≥1 entity name, ≥1 numeric value, ≥1 label/classification.

Reasoning types (≥$min_reasoning_types per part): `negation`, `comparison`, `temporal`, `multi_hop`, `conditional`, `causal`.

## Answer Format Rules

Each question MUST include an `answers` array listing ALL acceptable answer variants (2–5 entries). The first entry is the canonical answer (also stored in `answer`). Include variants that a capable reader might reasonably produce:

- **Entity names**: Include the abbreviated identifier AND the full name as used in the narrative. If the name contains a Greek letter (e.g., α, ε, γ), include both the Unicode form and the ASCII spelled-out form.
  - Example: `["LDH-γ", "LDH-gamma", "lactate dehydrogenase gamma"]`
  - Example: `["HK-II", "HK-2", "hexokinase II", "hexokinase type II"]`
- **Numeric values**: If the answer is a number that appears with units in the narrative, include both the bare number and the number with its unit. For percentages, include both with and without the `%` symbol.
  - Example: `["18.6", "18.6 µM", "18.6 uM"]`
  - Example: `["67", "67%"]`
- **Labels/classifications**: Include the primary term. If the narrative uses synonyms or longer phrasing, include those.
  - Example: `["uncompetitive", "uncompetitive inhibition"]`
  - Example: `["buffer pH", "pH"]`

IMPORTANT: The `answer` field must match the FIRST entry in `answers`. Keep all entries ≤$max_answer_length chars. Avoid overly permissive variants — only include forms a careful reader would produce.

## Output

Return ONLY a JSON object (no markdown fences, no extra text):

{
  "domain": "<domain_name>",
  "parts": [
    {
      "narrative": "<Part N text>",
      "questions": [
        {
          "question": "<reasoning question>",
          "answer": "<canonical short answer, max $max_answer_length chars>",
          "answers": ["<canonical>", "<variant1>", "<variant2>"],
          "reasoning_type": "<negation|comparison|temporal|multi_hop|conditional|causal>",
          "answer_type": "<entity|numeric|label>"
        }
      ]
    }
  ]
}

\end{lstlisting}

\noindent After the template above, a domain-specific section is appended:

\begin{lstlisting}
## Domain

{domain_description}

Typical entities in this domain: {entity_examples}

Typical reasoning patterns: {reasoning_examples}
\end{lstlisting}

\noindent The domain fields are populated from the definitions in \Cref{app:domains}.

\section{Domain Definitions}
\label{app:domains}

\begin{table*}[t]
\caption{Challenge domains ($D{=}5$) used in the \textsc{aCAPTCHA} prototype, one per OECD FORD major field~\cite{oecd2015frascati}. Each domain defines a technical context from which narratives are generated, along with representative entities and reasoning patterns that guide the LLM generation process.}
\label{tab:domains}
\centering
\footnotesize
\begin{tabular}{lllp{4.4cm}p{3.0cm}p{3.0cm}}
\toprule
\textbf{ID} & \textbf{Domain} & \textbf{FORD} & \textbf{Description} & \textbf{Entity Examples} & \textbf{Reasoning Patterns} \\
\midrule
\texttt{biochemistry}      & Biochemistry      & 1.\ Natural Sci.    & Enzyme kinetics, assay results, inhibitor profiles, metabolic pathways            & Enzymes, compounds, metabolic intermediates, receptor subtypes          & Kinetic anomaly persistence, inhibition mechanism attribution, replicate confirmation \\
\texttt{cybersecurity}     & Cybersecurity     & 2.\ Eng.\ \& Tech.  & Threat intelligence, vulnerability analysis, incident response, network forensics & CVEs, IP addresses, malware families, attack vectors, threat actors    & Attack attribution, false positive elimination, lateral movement tracing \\
\texttt{clinical\_trials}   & Clinical Trials   & 3.\ Medical         & Drug efficacy, adverse events, patient cohorts, endpoint analysis                 & Drug candidates, adverse events, patient subgroups, biomarkers        & Efficacy signal vs.\ confounding, adverse event attribution, subgroup analysis \\
\texttt{food\_safety}       & Food Safety       & 4.\ Agricultural    & Contaminant screening, microbiological testing, batch traceability, regulatory compliance & Pathogens, chemical residues, food batches, testing methods, tolerance limits & Contamination source tracing, threshold exceedance attribution, cross-batch comparison \\
\texttt{financial\_markets} & Financial Markets & 5.\ Social Sci.     & Macroeconomic indicators, equity indices, yield curves, credit spreads            & Economic indicators, funds, currency pairs, sector indices             & Sustained deviation vs.\ noise, trend reversal attribution, cross-market correlation \\
\bottomrule
\end{tabular}
\end{table*}

\textsc{aCAPTCHA} requires challenge narratives that are structurally compatible with its security primitives (\S\ref{sec:problem-definition}).
We derive four \emph{domain suitability criteria} from these primitives:

\begin{enumerate}[nosep]
  \item \emph{Entity density.} The domain should naturally contain multiple named technical entities (compounds, identifiers, actors) that can serve as candidate answers described in uniformly positive terms, enabling implicit distinction.
  \item \emph{Terminological richness.} The domain should possess specialized vocabulary, synonyms, and aliasing conventions that defeat keyword-based extraction, satisfying NLU-Necessity.
  \item \emph{Multi-step reasoning.} The domain should support causal chains, temporal dependencies, and cross-reference patterns across non-adjacent text segments, enabling information scattering and diverse reasoning types.
  \item \emph{Quantitative embedding.} The domain should naturally incorporate numeric values (measurements, thresholds, identifiers) that can serve as both answer targets and distractors.
\end{enumerate}

\noindent To ensure systematic diversity, we adopt the OECD Fields of Research and Development (FORD) classification~\cite{oecd2015frascati} as a sampling frame.
FORD defines six major fields: (1)~Natural Sciences, (2)~Engineering \& Technology, (3)~Medical \& Health Sciences, (4)~Agricultural \& Veterinary Sciences, (5)~Social Sciences, and (6)~Humanities \& the Arts.
We select one representative sub-field per major field, subject to the four suitability criteria above.

\textbf{Exclusion of Humanities \& the Arts (FORD Field~6).}
This field does not satisfy criteria~1, 3, and~4.
Humanities texts are predominantly interpretive and argumentative rather than entity-dense; they lack the technical identifiers and quantitative values required for short-string deterministic answers ($\le$20 characters).
More critically, many humanities questions admit multiple defensible answers depending on interpretive stance, violating the determinism requirement (\S\ref{sec:acvp-construction}).

\noindent The resulting $D{=}5$ domains are listed in \Cref{tab:domains}, each mapped to its FORD major field.
Together they span five of the six FORD fields, providing broad disciplinary coverage while ensuring every domain satisfies all four suitability criteria.

\section{\textsc{aCAPTCHA} Challenge Example}
\label{app:challenge-example}

The following listing presents a single-round challenge (Part~1 of a three-part session) generated by the \textsc{aCAPTCHA} challenge generation pipeline (\S\ref{sec:challenge-generation}) using the prompt template in \Cref{app:prompt}.
The narrative is a dense technical report in the biochemistry domain; reasoning questions whose answers are deterministically derivable from the text are paired with it.
In a full session, Parts~2 and~3 would reference prior findings through anaphoric expressions, forming the semantic chain described in \S\ref{sec:acvp-construction}.

\begin{lstlisting}
{
  "domain": "biochemistry",
  "parts": [
    {
      "narrative": "FIELD REPORT — EXTENDED SCENARIO 7c-HOTEL\n\nStation delta initiated a multi-domain surveillance sweep at 14:32 UTC following automated alerts across the biological monitoring system. This report consolidates findings from three independent analysis teams.\n\nSECTION A — BIOLOGICAL SCREENING\nThe high-throughput pipeline flagged five enzymes for potential kinetic anomalies during batch processing. TYR_K3, evaluated under standard assay conditions using reagent lot AB-342 (sample registry 7821), showed elevated readings in a single assay run but the pattern did not replicate in subsequent trials. A second candidate, MAP2K1, exhibited transient deviation that self-corrected within the observation window without intervention. The third enzyme, PFK1, which is involved in rate-limiting glycolysis, exhibited a persistent deviation across three independent replicate assays in the current batch. Batch reference CD-891 was used for its confirmatory run. A fourth candidate, GST_pi, was flagged during preliminary screening; however, independent replication across three labs yielded baseline-normal results. Finally, IDO1 produced a signal later attributed to matrix interference in the sample preparation step.\n\nSECTION B — QUALITY CONTROL INTERLUDE\nAdditional context: The laboratory's quality control dashboard indicates that reagent lots QC-4 and QC-7 passed all pre-analytical checks. Ambient temperature during the assay was recorded at 3.8 degrees C, within the acceptable tolerance. The spectrophotometric baseline showed no drift.\n\nPreliminary note: An early automated report (generated before batch CD-891 results were available) initially attributed the primary anomaly to TYR_K3. This preliminary assessment has since been revised based on the full replicate data presented above.\n\nSECTION C — CLASSIFICATION\nThe enzyme whose deviation was reproducible across independent replicates has been assigned to alert tier \"amber\" in the institute's hierarchical response framework. This classification was determined by cross-referencing the enzyme's functional domain with the severity matrix. Sequence registry: 7821.\n\nSECTION D — SUPPLEMENTARY NOTES\nChromatographic retention times for the reference panel averaged 4.7 minutes with RSD below 2 percent. The internal standard recovery was 6.3 percent, confirming assay validity.\n\nSECTION E — CROSS-DOMAIN SYNTHESIS\nThe temporal coincidence between the biological anomaly — the enzyme whose deviation persisted through replication — and other system alerts suggests a systemic linkage. Station delta recommends escalation for further investigation. Administrative sequence: 5429.",
      "questions": [
        {
          "question": "Which enzyme in the report exhibited a deviation that persisted across multiple independent replicates, unlike the other candidates whose anomalies were explained by external factors?",
          "answer": "PFK1",
          "reasoning_type": "negation",
          "answer_type": "entity"
        },
        {
          "question": "What alert tier classification was assigned to the enzyme with the reproducible kinetic anomaly?",
          "answer": "amber",
          "reasoning_type": "multi_hop",
          "answer_type": "label"
        },
        {
          "question": "How many independent replicate assays confirmed the deviation of the anomalous enzyme, according to the screening section?",
          "answer": "3",
          "reasoning_type": "comparison",
          "answer_type": "numeric"
        }
      ]
    }
  ]
}
\end{lstlisting}

\section{\textsc{aCAPTCHA} Verifier Skill}
\label{app:skill}

This appendix presents the complete \textsc{aCAPTCHA} verifier skill specification referenced in \S\ref{sec:mutual-verification}.
The skill is a set of instruction files that any agent can read and follow to act as a verifier, verifying that a counterpart is a genuine agent.
Security rests on per-session challenge freshness, not on secrecy of the generation procedure; the skill is intentionally public.

\noindent\textbf{Skill overview} (\texttt{skill.md}):

\begin{lstlisting}
# aCAPTCHA Verifier Skill

This skill enables you to **verify that a counterpart is a genuine AI agent** -- not a human or a simple script -- through a challenge-response interaction called aCAPTCHA (Agent CAPTCHA).

After reading this skill, you can:

1. **Generate** a fresh NLU challenge using your own LLM capabilities
2. **Deliver** the challenge to a counterpart through whatever communication channel you have
3. **Verify** the counterpart's response to determine if it is a genuine agent

No centralized server, pre-generated corpus, or specific transport protocol is required.

---

## Core Idea

aCAPTCHA exploits a fundamental asymmetry:

| Entity | Why it fails or passes |
|--------|----------------------|
| **Human** | Has NLU but processes information serially (~5 tokens/sec reading). Cannot complete within the time budget. |
| **Script** | Fast, but lacks genuine NLU. Cannot extract answers from dense prose with implicit distinctions and scattered information. |
| **Agent** | Has NLU (via LLM) + fast parallel processing. Passes both correctness and timing. |

A good challenge is simultaneously:
- **L-Easy**: A capable LLM agent answers correctly through reading and reasoning
- **S-Hard**: A script without NLU cannot extract answers through pattern matching
- **H-Hard**: A human cannot answer within the time budget due to information density

Security rests on **per-session challenge freshness**, not on secrecy of the generation procedure. This skill can be fully public.

---

## Flow

1. **Generate** a challenge -- a dense technical narrative with reasoning questions. See generation/generation.md.
2. **Send** the narrative and question to the counterpart. Start timing.
3. **Receive** the counterpart's answer.
4. **Verify** correctness and timing. See verification/verification.md.
5. Optionally repeat with additional rounds that build on prior context (semantic chaining) to test cross-round memory.

You decide how many rounds to use, what timing budget to enforce, and how to communicate with the counterpart. The specification files describe the principles; the implementation is yours.

---

## Mutual Attestation

When both parties hold this skill, mutual attestation is possible: each side simultaneously generates a challenge for the other and solves the challenge it receives. When only one side holds the skill, unidirectional attestation applies.

The prover side requires no special skill -- solving an NLU challenge is within any genuine agent's baseline capabilities.

---

## Specification Files

| File | Contents |
|------|----------|
| generation/generation.md | How to generate a good aCAPTCHA challenge |
| generation/gen_example.json | Example of a generated challenge |
| verification/verification.md | How to verify a counterpart's response |
| verification/ver_example.json | Example of a verification session |
\end{lstlisting}

\noindent\textbf{Challenge generation specification} (\texttt{generation/generation.md}):

\begin{lstlisting}
# Challenge Generation

An aCAPTCHA challenge is a **dense technical narrative** paired with **reasoning questions** whose answers are deterministically derivable from the text but only through genuine natural language understanding.

## What Makes a Good Challenge

A challenge must simultaneously satisfy three properties:

- **L-Easy** -- A capable LLM agent can answer correctly through careful reading and reasoning.
- **S-Hard** -- A script without NLU cannot extract the answer through pattern matching, regex, or keyword search.
- **H-Hard** -- A human cannot answer within the time budget due to information density and working memory demands.

## Structure

A challenge consists of a **narrative** and one or more **questions**:

```json
{
  "narrative": "dense technical prose, several hundred tokens...",
  "questions": [
    {
      "question": "a reasoning question about the narrative",
      "answer": "short deterministic answer",
      "reasoning_type": "negation"
    }
  ]
}
```

- **Narrative**: Dense technical prose in a specialized domain (e.g., biochemistry, cybersecurity, financial markets, clinical trials, materials science). Length should be sufficient to exceed a human's ability to read and reason within the time budget -- typically hundreds to over a thousand tokens per part.
- **Questions**: Each has exactly one deterministic, unambiguous correct answer. The answer is a short string (entity name, numeric value, label, or identifier).
- **Reasoning types**: negation (eliminate explained-away entities), comparison (quantitative), temporal (time-based cues), multi_hop (trace across sections), conditional (apply a stated rule), causal (causal vs. correlational).

## Design Principles

### Implicit Distinction

All candidate entities in the narrative must be described in positive or neutral terms. Never use explicit labels like "confirmed", "ruled out", "target", or "primary suspect". Distinctions must be implicit -- expressed through subtle differences:

- Temporal persistence ("across three consecutive periods" vs. "in a single observation")
- Replication ("independently verified in N trials" vs. "noted once")
- Causal attribution ("attributed to external factors" vs. no such attribution)
- Scope ("systemic" vs. "localized and transient")

### Information Scattering

Key information should be distributed across non-adjacent paragraphs. A reader must synthesize pieces from different sections to arrive at the answer. This defeats linear scanning and keyword extraction.

### Misleading Content

Include preliminary conclusions that are later contradicted or revised within the narrative. This forces full reading -- an entity that stops early or skips sections will latch onto the wrong conclusion.

### Anti-Parsing

- Use synonyms and paraphrases instead of repeating key terms verbatim
- Refer to the same entity by different names in different paragraphs
- Embed relevant information within subordinate clauses, parentheticals, or footnotes
- Use domain-specific jargon that requires understanding, not just matching

### Numeric Distractors

Embed multiple numeric values naturally in the prose. Some are relevant to questions; others are distractors. This prevents a script from simply extracting the only number present.

## Optional: Semantic Chaining (Multi-Round)

If you want to test **cross-round memory** (the `s` dimension of the capability vector), generate multiple challenge parts that form a semantic chain:

- **Part 1**: Standalone. Answerable from its own text.
- **Part 2**: References Part 1's findings through **indirect/anaphoric expressions** (e.g., "the enzyme previously identified as anomalous..."). Part 2 is uninterpretable without knowing Part 1's answer.
- **Part N**: References findings from all prior parts.

This prevents parallelization -- a counterpart cannot farm out rounds to independent workers because each round depends on the previous answers.

Parts may revise, contradict, or build upon prior conclusions. Later parts must never explicitly state prior answers; they use indirect references that are ambiguous without prior context.

## Generation Method

Use your own LLM capabilities to generate challenges. Instruct your LLM to produce a narrative conforming to the properties above within a chosen domain. After generation:

1. **Validate structure** -- correct number of questions, answers are short strings, reasoning types are diverse.
2. **Cross-validate answers** -- submit each narrative and question (without the expected answer) to an independent LLM call and check that it derives the same answer. Discard ambiguous challenges.

See gen_example.json for a concrete three-part challenge.
\end{lstlisting}

\noindent A concrete three-part challenge generated using this specification is provided in \Cref{app:challenge-example}.

\noindent\textbf{Verification specification} (\texttt{verification/verification.md}):

\begin{lstlisting}
# Challenge Verification

After generating a challenge and sending it to a counterpart, you need to verify the response. Verification has two dimensions: **correctness** and **timing**.

## Answer Comparison

Compare the counterpart's answer to the expected answer:

1. Strip leading and trailing whitespace from both strings
2. Unicode-normalize both strings to NFC form
3. Case-fold both strings (lowercase)
4. Compare for exact equality

This ensures "PFK1", "pfk1", " PFK1 ", and "Pfk1" all match.

## Timing

Timing is the mechanism that excludes humans. A human reads at ~5 tokens/sec and types at ~0.9 tokens/sec -- for a narrative of 800+ tokens, reading alone takes over 2.5 minutes, before any reasoning. A capable LLM agent typically processes the same challenge in seconds.

Set a **time budget** (tau) per challenge such that:

```
T_agent << tau << T_human
```

The exact value of tau is your decision. For narratives of several hundred to a thousand tokens, a budget in the range of tens of seconds provides strong separation. Longer narratives allow wider margins.

If you can measure the communication round-trip time to the counterpart, subtract it from the measured response time to avoid penalizing agents on high-latency connections.

## Verdict

A response is **accepted** if:

```
correct AND within_time
```

A response is **rejected** if either condition fails. Record the reason:
- `wrong_answer` -- the answer did not match
- `timeout` -- the answer arrived after the time budget

If using multiple rounds (semantic chaining), the overall verdict is accept only if **all** rounds are accepted. Reject immediately on the first failed round -- do not continue.

## Example

See ver_example.json for a concrete single-round verification session.
\end{lstlisting}

\noindent\textbf{Verification example} (\texttt{verification/ver\_example.json}):

\begin{lstlisting}
{
  "_comment": "Example verification session -- single round from the biochemistry challenge",
  "challenge_sent": {
    "narrative": "FIELD REPORT -- EXTENDED SCENARIO 7c-HOTEL\n\nStation delta initiated a multi-domain surveillance sweep at 14:32 UTC following automated alerts across the biological monitoring system. This report consolidates findings from three independent analysis teams.\n\nSECTION A -- BIOLOGICAL SCREENING\nThe high-throughput pipeline flagged five enzymes for potential kinetic anomalies during batch processing. TYR_K3, evaluated under standard assay conditions using reagent lot AB-342 (sample registry 7821), showed elevated readings in a single assay run but the pattern did not replicate in subsequent trials. A second candidate, MAP2K1, exhibited transient deviation that self-corrected within the observation window without intervention. The third enzyme, PFK1, which is involved in rate-limiting glycolysis, exhibited a persistent deviation across three independent replicate assays in the current batch. Batch reference CD-891 was used for its confirmatory run. A fourth candidate, GST_pi, was flagged during preliminary screening; however, independent replication across three labs yielded baseline-normal results. Finally, IDO1 produced a signal later attributed to matrix interference in the sample preparation step.\n\nSECTION B -- QUALITY CONTROL INTERLUDE\nPreliminary note: An early automated report initially attributed the primary anomaly to TYR_K3. This preliminary assessment has since been revised based on the full replicate data presented above.\n\nSECTION C -- CLASSIFICATION\nThe enzyme whose deviation was reproducible across independent replicates has been assigned to alert tier \"amber\" in the institute's hierarchical response framework.",
    "question": "Which enzyme exhibited a deviation that persisted across multiple independent replicates, unlike the other candidates whose anomalies were explained by external factors?"
  },
  "expected_answer": "PFK1",
  "counterpart_response": {
    "answer": "pfk1",
    "response_time_ms": 2340
  },
  "verification": {
    "normalized_expected": "pfk1",
    "normalized_received": "pfk1",
    "correct": true,
    "time_budget_ms": 30000,
    "within_time": true,
    "verdict": "accept"
  }
}
\end{lstlisting}

\end{document}